\documentclass[runningheads]{llncs}
\usepackage{xcolor,latexsym,amssymb,stmaryrd,graphicx,hyperref}
\usepackage{amsmath,amssymb}

\newcommand{\bA}{\ensuremath{\mathfrak{A}}}
\newcommand{\Fresse}{Fra\"{i}ss\'e}
\newcommand{\bB}{\ensuremath{\mathfrak{B}}}
\newcommand{\bC}{\ensuremath{\mathfrak{C}}}
\newcommand{\bD}{\ensuremath{\mathfrak{D}}}

\newcommand{\red}[1]{#1}
\DeclareMathOperator{\Age}{Age}
\DeclareMathOperator{\Aut}{Aut}
\DeclareMathOperator{\ar}{ar}
\DeclareMathOperator{\Forbe}{Forb}
\DeclareMathOperator{\Betw}{Betw}

% Gegeben: ASNP satz.
% Frage: Gebe verbotene substrukturen vom MC CORE 
% (sind es ueberhaupt endlich viele?)

% was ausserdem entscheidbar sein sollte: 
% strict witdh n for fixed n,
% whether CSP(A) is in FO?

% Questions:
% is the strong amalgamation template in some sense unique?

% Observations.
% the strong amalgamation template corresponds to the restriction
% to negation and inequality free first-order logic in amalgamation SNP.
% the assumption to be a model-complete core has an effect on the
% first-order axiomatization of the template, and therefore possibly also
% on an SNP normal form? No, not really. 

% I think since it is possible to work without real equality, one should also
% require this from the "programmer".
% The real programming  language should therefore be monotone SNP 
% The amalgamation property is something that should be checked by a computer
% not by the programmer. 

\DeclareMathOperator{\Csp}{CSP}
\DeclareMathOperator{\Forb}{Forb}

\title{ASNP: 
a tame fragment of existential second-order logic}
\author{Manuel Bodirsky, Simon Kn\"auer, Florian Starke}
\institute{Institute of Algebra, TU Dresden} 

\begin{document}
\maketitle

\begin{abstract}
\emph{Amalgamation SNP (ASNP)} is a fragment of existential second-order logic that strictly contains binary connected MMSNP of Feder and Vardi and binary guarded monotone SNP of Bienvenu, ten Cate, Lutz, and Wolter; it is a promising candidate for an expressive subclass of NP that exhibits a complexity dichotomy. We show that 
ASNP has 
 a complexity dichotomy if and only 
if the infinite-domain dichotomy conjecture holds 
for constraint satisfaction problems 
for \red{first-order} reducts of binary finitely bounded homogeneous structures. For such CSPs, powerful universal-algebraic hardness conditions are known that are conjectured to describe the border between NP-hard and polynomial-time tractable CSPs. 
The connection to CSPs also implies that
every ASNP sentence can be
evaluated in polynomial time on classes of finite structures of bounded treewidth. 
We show that the syntax of 
ASNP is decidable. 
The proof relies on the fact that for classes of finite binary structures given by finitely many forbidden substructures, the amalgamation property is decidable.  
\end{abstract}
  
  % Keywords:
  % Second-Order Logic
%Complexity Dichotomy
%Constraint Satisfaction
%Model Theory
%Amalgamation
%Forbidden Patterns
  
% Results: 
% \begin{itemize}
%\item Syntax of Amalgamation SNP is effective.
%(Might require for every diagram up to certain size an instruction how to amalgamate)  
%\item Every Amalgamation SNP problem is a finite union of CSPs for finitely many first-order reducts of finitely bounded homogeneous structures. 
%\item Every first-order reduct of a finitely bounded homogeneous structure is polynomial-time equivalent to a CSP in Amalgamation SNP. 
%\item Conclusion: amalgamation SNP has a complexity dichotomy if and only if first-order reducts of finitely bounded homogeneous structures have a complexity dichotomy. 
% \end{itemize}
 
%\medskip 
 
% Questions:
% \begin{itemize}
%\item Can we work with transitive templates only?
 %\end{itemize}
 
 \section{Introduction}
 Feder and Vardi in their groundbreaking work~\cite{FederVardi} formulated the 
 famous \emph{dichotomy conjecture} for finite-domain constraint satisfaction problems, which has recently been resolved~\cite{BulatovFVConjecture,ZhukFVConjecture}. Their motivation to study finite-domain CSPs was the question which
 fragments of existential second-order logic might exhibit a complexity dichotomy in the sense
 that every problem that can be expressed in the fragment is either in P or NP-complete. 
 Existential second-order logic without any restriction is known to capture NP~\cite{Fagin} 
 and hence does not have a complexity dichotomy by an old result of Ladner~\cite{Ladner}. Feder and Vardi proved
 that even the fragments of \emph{monadic SNP} and \emph{monotone SNP}
 do not have a complexity dichotomy since
 every problem in NP is polynomial-time equivalent to a problem that can be expressed in these fragments. However, the dichotomy for finite-domain CSPs implies that \emph{monotone monadic SNP (MMSNP)} has a dichotomy, too~\cite{FederVardi,Kun}.
  
 MMSNP is also known to have a 
 tight connection to a certain class of infinite-domain CSPs~\cite{MMSNP}: an MMSNP sentence is equivalent to a \emph{connected} MMSNP sentence if and only if it describes an infinite-domain CSP. Moreover, every 
 problem in MMSNP is equivalent to a finite disjunction of connected MMSNP sentences. 
   The infinite structures that appear in this connection are tame from a model-theoretic perspective: they are reducts of finitely bounded homogeneous structures (see Section~\ref{sect:amalgamation}). CSPs for
 such structures are believed to have a complexity dichotomy, too; there is even a 
 known hardness condition
 such that all other CSPs in the class are conjectured to be in P~\cite{BPP-projective-homomorphisms}. The hardness condition can be expressed in several equivalent forms~\cite{BartoPinskerDichotomy,BKOPP-equations}. 
    
In this paper we investigate another candidate for 
an expressive logic that has a complexity dichotomy. Our minimum requirement for what constitutes a \emph{logic} is relatively liberal:
 we require that the syntax of the logic should
 be decidable. The same requirement has been
 made for the question whether there exists a logic that captures the class of \red{polynomial-time solvable decision problems} (see, e.g.,~\cite{Gurevich,GroheLogicforP}). 
 The idea of our logic is to modify monotone SNP
so that only CSPs for model-theoretically tame
structures can be expressed in the logic;
the challenge is to come up with a definition
of such a logic which has a decidable syntax. 
We would like to require that the 
(universal) first-order part of a monotone SNP sentence describes an \emph{amalgamation class}. We mention that the \emph{Joint Embedding Property (JEP)}, which follows from the \emph{Amalgamation Property (AP)}, 
has recently been shown to be undecidable~\cite{BraunfeldUndec}. 
%Our logic therefore has a mechanism 
%to guarantee that the first-order part of the sentences in the logic always describes
%a class of structures with the JEP. 
%Given the JEP, the AP is decidable, as we will show
In contrast, we use the fact that the AP for binary signatures is decidable 
(Section~\ref{sect:amalgamation-decidable}). 
We call our new logic \red{\emph{Amalgamation SNP (ASNP)}}.
 This logic contains binary connected MMSNP;
 it also contains the more expressive logic of \emph{binary connected guarded monotone SNP}.
 Guarded monotone SNP (GMSNP) has been introduced in the context
 of knowledge representation~\cite{OBDA} (see Section~\ref{sect:gmsnp}). Every problem
 that can be expressed in ASNP or in connected GMSNP is a CSP for some countably infinite $\omega$-categorical structure $\bB$\@. 
 In Section~\ref{sect:btw} we present an example application  of this fact: every problem that can be expressed in one of these logics can be solved in polynomial time on instances of bounded treewidth.

 \section{Constraint Satisfaction Problems}
Let $\bA,\bB$ be structures with a finite relational signature $\tau$; 
each symbol $R \in \tau$ is equipped with an \emph{arity} $\ar(R) \in {\mathbb N}$. 
 A function $h \colon A \to B$ is called a \emph{homomorphism from $\bA$ to $\bB$} if 
 for every $R \in \tau$ and $(a_1,\dots,a_{\ar(R)}) \in R^{\bA}$ we have $(h(a_1),\dots,h(a_{\ar(R)})) \in R^{\bB}$; in this case we write $\bA \to \bB$.
 We write $\Csp(\bB)$ for the class of all finite
 $\tau$-structures $\bA$ such that $\bA \to \bB$.
 
 \begin{example}
 If $\bB = K_3$ is the 3-clique, i.e., the complete undirected graph with three vertices, then $\Csp(\bB)$ is the graph 3-colouring problem,
 which is NP-complete~\cite{GareyJohnson}. 
 \end{example}
 
 \begin{example}\label{expl:q1}
 If $\bB = ({\mathbb Q};<)$ 
 then $\Csp(\bB)$ is the digraph acyclicity problem, which is in P. 
 \end{example}
 
 \begin{example}\label{expl:betw-1}
 If $\bB = ({\mathbb Q};\Betw)$ for $\Betw := \{(x,y,z) \mid x < y < z \vee z < y < x \}$  
 then $\Csp(\bB)$ is the Betweenness problem, which is NP-complete~\cite{GareyJohnson}. 
\end{example}

 The
\emph{union} of two $\tau$-structures $\bA,\bB$ is the $\tau$-structure
$\bA\cup\bB$ with domain $A \cup B$ and the relation
$R^{\bA\cup\bB} := R^{\bA}\cup R^{\bB}$ for every  $R\in\tau$. 
The \emph{intersection} $\bA \cap \bB$ is defined analogously. 
A \emph{disjoint
  union} of $\bA$ and $\bB$ is the union of isomorphic copies of
$\bA$ and $\bB$ with disjoint domains. 
As disjoint unions are unique up
to isomorphism, we usually speak of \emph{the} disjoint union of $\bA$ and
$\bB$, and denote it by $\bA \uplus \bB$. 
 A structure is \emph{connected} if it cannot be written as a disjoint union of at least two structures 
 with non-empty domain. 
 A class of structures $\mathcal C$ is \emph{closed under inverse homomorphisms} if whenever $\bB \in {\mathcal C}$ and $\bA$ homomorphically maps to $\bB$ we have $\bA \in {\mathcal C}$. 
 If $\tau$ is a finite relational signature, then 
it is well-known and easy to see~\cite{Bodirsky-HDR} that ${\mathcal C} = \Csp(\bB)$ for a countably infinite $\tau$-structure $\bB$ if and only if $\mathcal C$ is closed under inverse homomorphisms and disjoint unions. 
  
\section{Monotone SNP} 
Let $\tau$ be a finite relational signature,
i.e., $\tau$ is a set of relation symbols $R$,
each equipped with an \emph{arity} $\ar(R) \in {\mathbb N}$. 
An \emph{SNP ($\tau$-) sentence} is an existential second-order ($\tau$-) sentence with a universal 
first-order part, i.e., 
a sentence of the form
$$ \exists R_1,\dots,R_k \, \forall x_1,\dots,x_n \colon \phi$$
where $\phi$ is a quantifier-free formula over the signature $\tau \cup \{R_1,\dots,R_k\}$. 
We make the additional convention that the equality symbol, which is usually allowed in first-order logic, is not allowed in $\phi$
(see~\cite{FederVardi}). 
We write $\llbracket \Phi \rrbracket$ for the class of all finite models of $\Phi$. 

\begin{example}\label{expl:q2}
$\Csp({\mathbb Q};<) =  \llbracket \Phi \rrbracket$ for the
SNP $\{<\}$-sentence $\Phi$ given below. 
\begin{align*}
\exists T \, \forall x,y,z \big ( & (\neg (x<y) \vee  T(x,y)) \\
\wedge & \big ( \neg T(x,y) \vee \neg T(y,z)  \vee T(x,z) \big) \wedge \neg T(x,x) \big )
\end{align*}
\end{example}

%If $\Phi$ is a SNP $\tau$-sentence, then we write $\llbracket \Phi \rrbracket$ for the class of all
%finite $\tau$-structures $\bA$ that satisfy $\Phi$. 
A class $\mathcal C$ of finite $\tau$-structures is said to be \emph{in SNP} if there exists an SNP $\tau$-sentence
$\Phi$ such that $\llbracket \Phi \rrbracket = {\mathcal C}$; we use analogous definitions for
all logics considered in this paper. 
We may assume that the quantifier-free part 
%$\phi$
of SNP sentences is written in conjunctive normal form, and then use the usual terminology (\emph{clauses}, \emph{literals}, etc). 

\begin{definition}\label{def:mmc}
An SNP $\tau$-sentence $\Phi$ with quantifier-free part $\phi$ and existentially quantified relation symbols $\sigma$ 
 is called 
\begin{itemize}
\item \emph{monotone} 
if each literal of $\phi$ with a symbol from $\tau$ is \emph{negative}, i.e., of the form
$\neg R(\bar x)$ for $R \in \tau$. 
\item \emph{monadic} if all the existentially
quantified relations are unary. 
\item \emph{connected} if each clause of $\phi$
is connected, i.e., the following $\tau \cup \sigma$-structure $\bC$ is connected: 
the domain of $\bC$ is the set of variables of the clause, and $t \in R^{\bC}$ if and only if $\neg R(t)$ is a disjunct of the clause. 
\end{itemize}
\end{definition}
%Monotone monadic SNP is abbreviated by MMSNP. 
The SNP sentence from Example~\ref{expl:q2} 
is monotone, but not monadic, and it can be shown that there does not exist an equivalent MMSNP sentence~\cite{Bodirsky}. 
The following
is taken from \cite{Bodirsky-HDR} and a proof can be found in Appendix~\ref{sect:a1} for the convenience of the reader. 

\begin{theorem}%[\cite{Bodirsky-HDR}]
\label{thm:connected}
Every sentence in connected monotone SNP
describes a problem of the form
$\Csp(\bB)$ for some relational structure $\bB$. 
Conversely, for every structure $\bB$, 
if $\Csp(\bB)$ is in SNP then it is also
in connected monotone SNP. 
\end{theorem}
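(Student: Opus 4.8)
The plan is to prove the two directions separately, in both cases leaning on the quoted characterization that a class $\mathcal{C}$ of finite $\tau$-structures equals $\Csp(\bB)$ for some countable $\bB$ if and only if $\mathcal{C}$ is closed under inverse homomorphisms and under disjoint unions. So for each direction it suffices to establish (or exploit) these two closure properties.

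For the first direction, let $\Phi = \exists R_1,\dots,R_k\,\forall \bar x\,\phi$ be connected monotone, with $\phi$ in CNF over $\tau \cup \sigma$ where $\sigma = \{R_1,\dots,R_k\}$, and I would verify both closures of $\llbracket \Phi\rrbracket$. For inverse homomorphisms, given $\bB \models \Phi$ with witnessing expansion $\bB^+$ and a homomorphism $h\colon \bA \to \bB$, I pull the existential relations back along $h$, setting $\bar a\in R_i^{\bA^+}$ iff $h(\bar a)\in R_i^{\bB}$, so that $h$ becomes a homomorphism $\bA^+\to\bB^+$ for the full signature. Then for each clause and each assignment $\bar x\mapsto\bar a$, the literal that $h(\bar a)$ satisfies in $\bB^+$ pulls back to a satisfied literal at $\bar a$ in $\bA^+$: for existential symbols this is immediate from the definition of $R_i^{\bA^+}$, and for a negative $\tau$-literal $\neg R(\bar x)$ — where monotonicity is used — it follows since $h$ is a homomorphism of $\bA$, so $h(\bar a)\notin R^{\bB}$ forces $\bar a\notin R^{\bA}$. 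Hence $\bA^+\models\forall\bar x\,\phi$ and $\bA\models\Phi$. For disjoint unions, given $\bA_1,\bA_2\models\Phi$ with witnesses $\bA_1^+,\bA_2^+$, I take $\bA^+ := \bA_1^+\uplus\bA_2^+$; here connectivity is used. For any clause and assignment, if all its variables land in the same part the clause holds because that part is a model; otherwise, since the negative-literal structure $\bC$ of the clause is connected, some negated atom $\neg R(\bar t)$ has its variables assigned across both parts, and as relations in a disjoint union never straddle the parts we get $\bar t\notin R^{\bA^+}$, satisfying that literal. Both closures then give $\llbracket\Phi\rrbracket = \Csp(\bB)$ via the characterization.

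For the converse, assume $\Csp(\bB) = \llbracket\Phi\rrbracket$ for an SNP sentence $\Phi$; since $\Csp(\bB)$ is a CSP it is already closed under inverse homomorphisms and disjoint unions. First I make it monotone: introduce a fresh existential symbol $R'$ for each $R\in\tau$, add the (connected) clauses $\neg R(\bar x)\vee R'(\bar x)$ forcing $R^{\bA}\subseteq R'$, and replace every occurrence of each $R\in\tau$ in the quantifier-free part by $R'$. The resulting sentence $\Psi$ is monotone, and $\bA\models\Psi$ holds exactly when $\bA$ maps, by the identity on its domain, to some $\bA^*\models\Phi$ obtained by enlarging the $\tau$-relations; closure of $\Csp(\bB)$ under inverse homomorphisms then gives $\llbracket\Psi\rrbracket = \Csp(\bB)$. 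Next I turn $\Psi$ into a connected sentence. The engine is that a variable-disjoint splitting of one clause lifts to a disjunction of sentences: if a clause is $C = C_1\vee C_2$ with $C_1,C_2$ on disjoint variable tuples, then writing $\phi_i$ for $\phi$ with $C$ replaced by $C_i$ one has $\forall\bar x\,\phi \equiv (\forall\bar x\,\phi_1)\vee(\forall\bar x\,\phi_2)$, and since the existential prefix distributes over $\vee$, $\Psi$ becomes $\Psi_1\vee\Psi_2$ with fewer disconnections. Iterating this along the connected components of each clause's negative-literal structure $\bC$ expresses $\llbracket\Psi\rrbracket$ as a finite union $\bigcup_j \Csp(\bB_j)$ of CSPs of connected monotone sentences $\Psi_j$. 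Finally I collapse this union to a single CSP using disjoint-union closure: if no single $\Psi_{j_0}$ sufficed, I pick $\bA_j\in\Csp(\bB)\setminus\Csp(\bB_j)$ for each $j$; their disjoint union lies in $\Csp(\bB)$, hence in some $\Csp(\bB_j)$, but then $\bA_j$ embeds into it and maps to $\bB_j$, a contradiction. Thus $\Csp(\bB) = \Csp(\bB_{j_0}) = \llbracket\Psi_{j_0}\rrbracket$ with $\Psi_{j_0}$ connected monotone.

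I expect the main obstacle to be this connectedness step, and specifically the positive existential literals. Because such a literal does not appear in the structure $\bC$, a positive literal whose variables meet several components of $\bC$ blocks the variable-disjoint splitting above, so before decomposing one must first put $\Psi$ into a normal form that controls these literals. The naive fix of encoding complements or totality reintroduces purely positive clauses, which are themselves disconnected; arranging a normalization that eliminates straddling positive literals while keeping the sentence monotone and every auxiliary clause connected is the delicate heart of the argument.
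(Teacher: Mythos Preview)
Your first direction is correct and essentially identical to the paper's argument. Your monotonicity step is also correct, and in fact more self-contained than the paper, which at that point simply invokes a result of Feder and Vardi rather than giving an explicit construction.

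The gap is exactly where you place it, and it is real. Your splitting lemma $\forall\bar x\,\phi \equiv (\forall\bar x\,\phi_1)\vee(\forall\bar x\,\phi_2)$ requires the two halves of the clause to sit on \emph{disjoint} variable sets, but Definition~\ref{def:mmc} builds the structure $\bC$ from the negative literals only. A positive existential literal whose variables meet two distinct components of $\bC$ therefore blocks the split, and you do not supply the normalisation you announce. Your own diagnosis of the naive fix is accurate: a totality clause $R(\bar x)\vee R^c(\bar x)$ has no negative literals, so its $\bC$ is a set of isolated vertices and hence maximally disconnected. Without this missing normalisation the converse direction is not proved.

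The paper's route is different and sidesteps the obstacle entirely. Rather than decomposing disconnected clauses, it \emph{forces} every clause to become connected by introducing one fresh existential binary symbol $E$: it adds (connected) clauses making $E$ symmetric and transitive and containing every pair of coordinates of every $\tau$-relation tuple, and then weakens each original clause $\phi'$ on variables $y_1,\dots,y_q$ to $\phi' \vee \bigvee_{i<j}\neg E(y_i,y_j)$. The added $\neg E$ disjuncts turn the negative-literal structure of each new clause into a clique on its own variables, hence connected, regardless of any positive literals in $\phi'$. Equivalence with the original sentence is then argued using closure of $\Csp(\bB)$ under disjoint unions: on each connected component of a satisfying expansion the relation $E$ is forced to be full, so the $\neg E$ disjuncts never fire there and the clause reduces to $\phi'$. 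This yields a single connected monotone sentence directly, without the disjunction-and-collapse manoeuvre.
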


% We first give a standard example, and then a variation of an example from~\cite{OBDA} which will be useful later to prove that ASNP is strictly larger than connected MMSNP. 
%\begin{example} 
%Present example from~\cite{OBDA}; todo. 
%\end{example}

\section{Amalgamation SNP}
\red{In this section we define the new logic \emph{Amalgamation SNP (ASNP)}.}
We first revisit some basic concepts from model theory. 

\subsection{The Amalgamation Property}
\label{sect:amalgamation}
Let $\tau$ be a finite relational signature
and let $\mathcal C$ be a class of $\tau$-structures. 
We say that $\mathcal C$ is \emph{finitely bounded} if there exists a finite set of finite $\tau$-structures ${\mathcal F}$
such that $\bA \in {\mathcal C}$ if and only if no structure in $\mathcal F$ embeds into $\bA$;
in this case we also write $\mathcal C = \Forb(\bA)$. 
Note that $\mathcal C$ is finitely bounded if and only if there exists a universal $\tau$-sentence
$\phi$ (which might involve the equality symbol) such that for every finite $\tau$-structure $\bA$ we have $\bA \models \phi$ if and only if $\bA \in {\mathcal C}$. We say that $\mathcal C$ has 
\begin{itemize}
\item the \emph{Joint Embedding Propety (JEP)}
if for all structures $\bB_1,\bB_2 \in {\mathcal C}$
there exists a structure $\bC \in {\mathcal C}$ 
that embeds both $\bB_1$ and $\bB_2$. 
\item the \emph{Amalgamation Property (AP)}
if for any two structures $\bB_1,\bB_2 \in {\mathcal C}$ such that $B_1 \cap B_2$ induce the same substructure in $\bB_1$ and in $\bB_2$ (a so-called \emph{amalgamation diagram}) 
 there exists a structure $\bC \in {\mathcal C}$
and embeddings $e_1 \colon \bB_1 \hookrightarrow \bC$ and $e_2 \colon \bB_2 \hookrightarrow \bC$ such that $e_1(a) = e_2(a)$ for all $a \in B_1 \cap B_2$. 
\end{itemize}
Note that since $\tau$ is relational, 
the AP implies the JEP. 
A class of finite $\tau$-structures which has the AP and is closed under induced substructures and isomorphisms is called an \emph{amalgamation class}. 

The \emph{age} of $\bB$ is the class of all finite $\tau$-structures that embed into $\bB$. 
We say that $\bB$ is \emph{finitely bounded} if 
$\Age(\bB)$ is finitely bounded. 
A relational $\tau$-structure $\bB$ is called \emph{homogeneous} if every isomorphism
between finite substructures of $\bB$ can
be extended to an automorphism of $\bB$. 
\Fresse's theorem implies that for every amalgamation class $\mathcal C$ there exists a countable homogeneous $\tau$-structure $\bB$ 
with $\Age(\bB) = {\mathcal C}$; the structure
$\bB$ is unique up to isomorphism, also called the \emph{\Fresse-limit} of $\mathcal C$. 
Conversely, it is easy to see that the age of a homogeneous $\tau$-structure is an amalgamation class. A structure $\bA$ is called 
 a \emph{reduct} of a structure $\bB$
if $\bA$ is obtained from $\bB$ by restricting the signature. It is called \red{a \emph{first-order reduct} of $\bB$ if $\bA$ is obtained from $\bB$ by
first expanding by all first-order definable relations, and then restricting the signature.}
An example of a first-order reduct of 
$({\mathbb Q};<)$ 
is the structure $({\mathbb Q};\Betw)$ 
from Example~\ref{expl:betw-1}. 

\subsection{Defining Amalgamation SNP} 
As we have mentioned in the introduction, 
the idea of our logic is to require that \red{a certain class
of finite structures associated to the 
first-order part of an SNP sentence} is an amalgamation class. 
%The Joint Embedding Property 
%(which is implied by the amalgamation property) 
%is insured explicitly by some mechanism, involving an additional existentially quantified binary relation $T$. Given the JEP, 
We then use the fact that for binary signatures, the amalgamation property is decidable (Section~\ref{sect:amalgamation-decidable}).  

\begin{definition}\label{def:asnp}
Let $\tau$ be a finite relational signature. 
%with maximal arity $l$.
%; we may assume for the sake of notation that all relations have arity $l$;
%amalgamation SNP without this restriction is defined analogously. 
An \emph{Amalgamation SNP $\tau$-sentence} is an SNP sentence $\Phi$ \red{of the form $\exists R_1,\dots,R_k \, \forall x_1,\dots,x_n \colon \phi$ where}
\begin{itemize}
\item $R_1,\dots,R_k$ are binary; 
\item $\phi$ is a conjunction of $\{R_1,\dots,R_k\}$-formulas and of conjuncts of the form 
$S(x_1,\dots,x_k) \Rightarrow \psi(x_1,\dots,x_k)$
where $S \in \tau$ and $\psi$ is a $\{R_1,\dots,R_k\}$-formula; 
\item the class of $\{R_1,\dots,R_k\}$-reducts 
of the finite models of $\phi$ is an amalgamation class.
\end{itemize}
\end{definition}
\red{Note that ASNP
inherits from SNP the restriction that equality symbols are not allowed.
Also note that Amalgamation SNP sentences
are necessarily monotone. This implies in particular that the class of $\{R_1,\dots,R_k\}$-reducts of the finite models of $\phi$ is precisely the class of finite $\{R_1,\dots,R_k\}$-structures that satisfy the conjuncts of $\phi$ that are $\{R_1,\dots,R_k\}$-formulas (i.e., that do not contain any symbol from $\tau$). }

\begin{example}\label{expl:q3}
The monotone SNP sentence from Example~\ref{expl:q2} describing  
$\Csp({\mathbb Q};<)$ is in ASNP.
The problem $\Csp({\mathbb Q};\Betw)$
from Example~\ref{expl:betw-1}
can be expressed by the ASNP\@ sentence
\begin{align*}
\exists T \, \forall x,y,z \big ( & (\Betw(x,y,z) \Rightarrow ((T(x,y) \wedge T(y,z)) \vee (T(z,y) \wedge T(y,x)) ) \\
\wedge & \big ( (T(x,y) \wedge T(y,z)) \Rightarrow T(x,z) \big) \wedge \neg T(x,x) \big ) \, .
\end{align*}
Also note that every finite-domain CSP can be expressed in ASNP.
\end{example}

%\begin{example}\label{expl:fincsp}
%\end{example}
%Note that if $T$ would have only one equivalence class, then (\ref{eq:4}) implies that $\Phi
%We will show in Section~\ref{sect:amalgamation-decidable} that the syntax of ASNP is decidable. 
\red{If $\rho$ are the existentially quantified binary
relations of an ASNP sentence $\Phi$, then 
the class of finite models of 
the first-order part of $\Phi$ has the JEP, 
and since equality
is not allowed in SNP the class is even closed under disjoint unions;
it follows that also $\Phi$ is closed under disjoint unions. It can be shown 
as in the proof of Theorem~\ref{thm:connected}
that every Amalgamation SNP sentence 
can be rewritten into an equivalent connected Amalgamation SNP sentence.}

\subsection{ASNP and CSPs}
We present the link between ASNP and infinite-domain CSPs. 

\begin{theorem}\label{thm:first}
For every ASNP $\tau$-sentence $\Phi$ there
exists a \red{first-order} reduct $\bC$ of a binary finitely bounded homogeneous structure such that
$\Csp(\bC) = \llbracket \Phi \rrbracket$. 
\end{theorem}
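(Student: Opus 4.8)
The plan is to read the existentially quantified relations and the universal part of $\Phi$ directly off the sentence and to produce $\bB$ as a \Fresse\ limit. Write $\rho := \{R_1,\dots,R_k\}$ and let $\theta$ be the conjunction of those conjuncts of $\phi$ that are $\rho$-formulas (i.e.\ contain no symbol from $\tau$). By the remark following Definition~\ref{def:asnp}, the class $\mathcal C$ of $\rho$-reducts of finite models of $\phi$ is exactly the class of finite $\rho$-structures satisfying $\forall \bar x\, \theta$, and by hypothesis $\mathcal C$ is an amalgamation class. \Fresse's theorem then yields a countable homogeneous $\rho$-structure $\bB$ with $\Age(\bB)=\mathcal C$; it is binary because $\rho$ is, and finitely bounded because $\mathcal C$ is defined by the universal $\rho$-sentence $\forall \bar x\, \theta$. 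For each $S \in \tau$ I would let $\psi_S$ be the conjunction of all formulas $\psi$ for which $S(\bar x) \Rightarrow \psi(\bar x)$ is a conjunct of $\phi$ (with $\psi_S$ an empty conjunction, so that $S^{\bC}$ below is the full relation, when $S$ occurs in no such conjunct), and define the $\tau$-structure $\bC$ over the domain of $\bB$ by $S^{\bC} := \{\bar b \mid \bB \models \psi_S(\bar b)\}$. Since each $\psi_S$ is a quantifier-free $\rho$-formula, $\bC$ is a first-order reduct of $\bB$, and it remains only to prove $\Csp(\bC) = \llbracket \Phi \rrbracket$.

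For $\llbracket \Phi \rrbracket \subseteq \Csp(\bC)$ I would take a finite $\tau$-structure $\bA \models \Phi$ with a witnessing $\rho$-expansion $\bA^\ast$, so that $\bA^\ast|_\rho \in \mathcal C = \Age(\bB)$ and $\bA^\ast \models \phi$, and fix an embedding $e \colon \bA^\ast|_\rho \hookrightarrow \bB$. For every $S \in \tau$ and $\bar a \in S^{\bA}$ the implication conjuncts force $\bA^\ast|_\rho \models \psi_S(\bar a)$; as $e$ is an embedding and $\psi_S$ is quantifier-free, $\bB \models \psi_S(e(\bar a))$, i.e.\ $e(\bar a) \in S^{\bC}$. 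Hence $e$ is a homomorphism $\bA \to \bC$.

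The converse inclusion is the step I expect to carry the weight, and the right choice of expansion is the crux. Given a homomorphism $h \colon \bA \to \bC$, I would expand $\bA$ to a $\rho$-structure $\bA^\ast$ by the \emph{exact preimage}, setting $R_i^{\bA^\ast} := \{\bar a \mid h(\bar a) \in R_i^{\bB}\}$. The point of taking the full preimage (rather than merely making $h$ a homomorphism) is that $h$ then both preserves and reflects every $R_i$, so that for any quantifier-free $\rho$-formula $\chi$ and tuple $\bar a$ one has $\bA^\ast|_\rho \models \chi(\bar a)$ if and only if $\bB \models \chi(h(\bar a))$. This single property settles both obligations at once: applying it to the conjuncts of $\theta$ and using $\bB \models \forall \bar x\, \theta$ (a universal sentence holding in $\bB$ because it holds on every finite substructure, all of which lie in $\mathcal C$) gives $\bA^\ast|_\rho \in \mathcal C$; and applying it to each $\psi_S$, together with $h(\bar a) \in S^{\bC}$ whenever $\bar a \in S^{\bA}$, gives that $\bA^\ast$ satisfies every implication conjunct. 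Thus $\bA^\ast \models \phi$, so $\bA \models \Phi$. The remaining routine points are that $\bB \models \forall \bar x\,\theta$ and that quantifier-free formulas are preserved both by embeddings and by maps that preserve and reflect all relations; neither presents a difficulty.
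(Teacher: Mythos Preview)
Your proposal is correct and follows essentially the same approach as the paper: both construct $\bB$ as the \Fresse\ limit of the amalgamation class $\mathcal C$, define $S^{\bC}$ by the conjunction of the right-hand sides of the implication conjuncts for $S$, and prove the two inclusions via (i) embedding the $\rho$-reduct of a witnessing expansion into $\bB$, and (ii) pulling back the $\rho$-relations along a homomorphism $h$. The only cosmetic difference is that the paper argues Claim~1 clause-by-clause over the CNF of $\psi$ (implicitly using that the combined $(\tau\cup\rho)$-structure $(\bC,\bB)$ satisfies $\psi$), whereas you separate the conjuncts into the pure $\rho$-part $\theta$ and the implication conjuncts; your formulation is arguably cleaner and makes the role of the preserve-and-reflect property of the pullback expansion more explicit.
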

\begin{proof}
Let $\rho$ be the set of existentially quantified relation symbols of $\Phi$. 
%Let $l$ be the maximal arity in $\tau$, 
Let $\phi = \forall x_1,\dots,x_n \colon \psi$, for a quantifier-free formula $\psi$ in conjunctive normal form, be the first-order part of $\Phi$.
Let $\mathcal C$ be the class of $\rho$-reducts 
of the finite models of $\phi$; by assumption, 
$\mathcal C$ is an amalgamation class. 
Moreover, $\mathcal C$ is finitely bounded because it is the class of models of a universal $\rho$-sentence. 
Let $\bB$ be the \Fresse-limit of $\mathcal C$; then $\bB$ is a finitely bounded homogeneous structure. 
\red{Let $\bC$ be the $\tau$-structure which is the
first-order reduct of the structure $\bB$ where
the relation $S^{\bC}$ for $S \in \tau$ is defined as follows:
if $\phi_1,\dots,\phi_s$ are all the $\rho$-formulas
such that $\psi$ contains the conjunct
$S(x_1,\dots,x_k) \Rightarrow \phi_i(x_1,\dots,x_k)$ for all $i \in \{1,\dots,s\}$, then the first-order definition of $S$ is 
given by $S(x_1,\dots,x_k) \Leftrightarrow (\phi_1 \wedge \dots \wedge \phi_s)$.}

%{\bf Claim 0.} If a finite $\tau$-structure
%$\bA$ has a $h$ homomorphism to $\bC$, 
%then it also has an injective homomorphism. 
%Let $

\medskip
{\bf Claim 1.} If $\bA$ is a finite $\tau$-structure  such that $\bA \to \bC$, then $\bA \models \Phi$.

Let $h \colon \bA \to \bC$ be a homomorphism. 
Let $\bA'$ be the $(\tau \cup \rho)$-expansion of $\bA$ where $R \in \rho$ of arity $l$ denotes $\{(a_1,\dots,a_l) \mid (h(a_1),\dots,h(a_l)) \in R^{\bB} \}$. 
Then 
$\bA'$ satisfies $\phi$:
to see this, let $a_1,\dots,a_n \in A$ and
let $\psi'$ be a conjunct of $\psi$. 
Since $\bC \models \forall x_1,\dots,x_n \colon \psi$ we have in particular that $\bC \models \psi'(h(a_1),\dots,h(a_n))$ and so there must be a disjunct $\psi''$ of $\psi'$ such that 
$\bC \models \psi''(h(a_1),\dots,h(a_n))$. 
Then one of the following cases applies. 
\begin{itemize}
\item $\psi''$ is a $\tau$-literal and hence must be negative since $\Phi$ is a monotone SNP sentence. In this case $\bC \models \psi''(h(a_1),\dots,h(a_n))$
implies $\bA' \models \psi''(a_1,\dots,a_n)$ since $h$ is a homomorphism.  
\item $\psi''$ is a $\rho$-literal. Then by the definition of $\bA'$ we have that
$\bA' \models \psi''(a_1,\dots,a_n)$ 
if and only if $\bC \models \psi''(h(a_1),\dots,h(a_n))$. 
\end{itemize}
Hence, $\bA' \models \psi'(a_1,\dots,a_n)$. 
Since the conjunct $\psi'$ of $\psi$
and $a_1,\dots,a_n \in A$ were arbitrarily chosen, we have that $\bA' \models \forall x_1,\dots,x_n \colon \psi$. Hence, $\bA$ satisfies $\Phi$. 

\medskip
{\bf Claim 2.} If $\bA$ is a finite $\tau$-structure such that $\bA \models \Phi$, then $\bA \to \bC$. 

If $\bA$ has an expansion $\bA'$ that satisfies $\phi$, then there exists an embedding from $\bA'$ into $\bB$ by the definition of $\bB$. This embedding is in particular a homomorphism from $\bA$ to $\bC$. \qed
\end{proof} 

%\begin{corollary}
%If the infinite-domain tractability conjecture is true,
%then Amalgamation SNP has a complexity dichotomy. 
%Amalgamation SNP has a complexity dichotomy if and only if the class of CSPs for reducts of finitely bounded homogeneous structures has a complexity dichotomy. In particular, if 
%\end{corollary}

The proof of the following theorem can be found in Appendix~\ref{sect:csp-ASNP}. 

\begin{theorem}\label{thm:find-asnp}
Let $\bC$ be a \red{first-order} reduct of a binary finitely bounded homogeneous structure $\bB$.
Then $\Csp(\bC)$ can be expressed in ASNP. 
%there exists an ASNP sentence $\Phi$ 
%such that
%$\Csp(\bC) = \llbracket \Phi \rrbracket$. 
\end{theorem}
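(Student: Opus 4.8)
The plan is to combine quantifier elimination with the construction in the proof of Theorem~\ref{thm:first}, run in reverse. Write $\rho$ for the (binary) signature of $\bB$ and $\tau$ for the signature of $\bC$. Since $\bB$ is homogeneous and $\rho$ is finite and relational, $\bB$ admits quantifier elimination, so every relation $S \in \tau$, being first-order definable over $\bB$, is in fact defined over $\bB$ by a quantifier-free $\rho$-formula $\psi_S$. I would take the existentially quantified relations of the ASNP sentence $\Phi$ to be exactly the symbols of $\rho$ (all binary, as required by the definition of ASNP), and for each $S \in \tau$ I would include the conjunct $S(\bar x) \Rightarrow \psi_S(\bar x)$; via the construction of Theorem~\ref{thm:first}, these conjuncts guarantee that the interpreted relations are precisely those of $\bC$.

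The delicate point is the choice of the $\rho$-only conjuncts, which have to cut out an amalgamation class. The obvious candidate is $\Age(\bB)$, which is finitely bounded; but finite boundedness only yields a universal axiomatization that may use equality, whereas ASNP forbids equality. Indeed, for $\bB = (\mathbb Q;<)$ the totality axiom genuinely requires equality and $\Age(\bB)$ is not axiomatizable without it. I would therefore not try to recover $\Age(\bB)$ on the nose. Instead I would let the $\rho$-only conjuncts form a single universal equality-free $\rho$-sentence $\chi$ axiomatizing the class $\mathcal C$ of all finite models of the equality-free universal theory of $\bB$; because $\bB$ is finitely bounded, only boundedly many variables are needed and $\chi$ can be taken finite. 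This $\mathcal C$ contains $\Age(\bB)$, is closed under substructures, and is again finitely bounded. (For $\bB = (\mathbb Q;<)$ this replaces finite linear orders by finite partial orders, exactly as in Example~\ref{expl:q2}.)

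Two facts then have to be established, and these are the heart of the argument. First, $\mathcal C$ has the amalgamation property, so that $\Phi$ really is an ASNP sentence; here I would amalgamate by taking a free-style amalgam of the diagram and completing it inside $\mathcal C$, using that the equality-free axioms constrain only the relations actually present. Let $\bD$ be the \Fresse-limit of $\mathcal C$ and let $\bC^{*}$ be the first-order reduct of $\bD$ in which each $S \in \tau$ is interpreted by $\psi_S$; by Theorem~\ref{thm:first} we then have $\llbracket \Phi \rrbracket = \Csp(\bC^{*})$. Second, I must show $\Csp(\bC^{*}) = \Csp(\bC)$, i.e.\ that $\bC$ and $\bC^{*}$ are homomorphically equivalent. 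One direction is easy: since $\Age(\bB) \subseteq \mathcal C$, the structure $\bB$ embeds into $\bD$, and an embedding preserves and reflects the quantifier-free formulas $\psi_S$, so it yields an embedding $\bC \hookrightarrow \bC^{*}$ and hence $\Csp(\bC) \subseteq \Csp(\bC^{*})$.

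The main obstacle is the reverse inclusion $\Csp(\bC^{*}) \subseteq \Csp(\bC)$, which amounts to showing that the $\tau$-reduct of every finite member of $\mathcal C$ maps homomorphically into $\bC$. This is exactly where the passage from $\Age(\bB)$ to the larger, equality-free class $\mathcal C$ has to be paid for: a finite structure in $\mathcal C \setminus \Age(\bB)$ is an \emph{incompletely determined} version of a structure in $\Age(\bB)$, and I would argue that its $\psi_S$-defined $\tau$-relations can always be realized by a suitable map into $\bB$, using the homogeneity and $\omega$-categoricity of $\bB$ together with quantifier elimination (in the linear-order example this is the observation that every finite poset maps into $(\mathbb Q;<)$ via a linear extension). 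Once both inclusions are in hand, $\llbracket \Phi \rrbracket = \Csp(\bC^{*}) = \Csp(\bC)$, as desired.
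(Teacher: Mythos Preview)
Your approach has two real gaps.

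First, quantifier elimination in $\bB$ yields a quantifier-free $\psi_S$, but this formula may use the equality symbol, and ASNP forbids equality. This is not a corner case: if $\bB$ is an infinite set with a single unary predicate (binary signature, finitely bounded, homogeneous), then equality is first-order definable but not definable by any equality-free quantifier-free $\rho$-formula, so for $\bC=(\bB;=)$ your conjunct $S(x,y)\Rightarrow\psi_S(x,y)$ cannot be written inside ASNP. The paper handles this by first adjoining to $\rho$ a fresh binary symbol $E$ interpreted in $\bB$ as equality (this preserves finite boundedness and homogeneity); every $\psi_S$ can then be rewritten with $E$ in place of $=$.

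Second, your justification of the amalgamation property is wrong, and the running example is mis-computed. A finite $\rho$-structure satisfies the equality-free universal theory of $\bB$ if and only if it admits a (not necessarily injective) map into $\bB$ that preserves and reflects all atomic formulas; hence $\mathcal C$ consists precisely of the ``blow-ups'' of members of $\Age(\bB)$. For $\bB=(\mathbb Q;<)$ this is the class of finite strict weak orders, not partial orders: the equality-free universal sentence $\forall x,y,z\big((x<y)\vee(y<x)\vee\neg(x<z)\vee(y<z)\big)$ holds in $(\mathbb Q;<)$ but fails in the three-element poset with $x<z$ and $y$ isolated. So the axioms do force new relations and your ``free-style amalgam, using that the equality-free axioms constrain only the relations actually present'' does not go through. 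The class $\mathcal C$ \emph{does} have the amalgamation property, but the clean way to see it is exactly the paper's move: one forms $\bB^*$ with domain $B\times\mathbb N$ and relations lifted along the first projection, checks directly that $\bB^*$ is homogeneous, and observes $\Age(\bB^*)=\mathcal C$. Once $E$ has been added and this identification is made, your construction and the paper's coincide; in particular the projection $\bB^*\to\bB$ preserves and reflects all equality-free quantifier-free formulas, which disposes of your ``main obstacle'' $\Csp(\bC^*)\subseteq\Csp(\bC)$ in one line.
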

%\begin{proof}

\begin{corollary}
ASNP has a complexity dichotomy
if and only if the infinite-domain dichotomy conjecture is true for \red{first-order} reducts of binary finitely bounded homogeneous structures. 
%ASNP has a complexity dichotomy if and only if the class of CSPs for reducts of finitely bounded homogeneous structures has a complexity dichotomy. In particular, if 
\end{corollary}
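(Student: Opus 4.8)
The plan is to prove the corollary as a direct consequence of the two preceding theorems, which together establish that the class of problems expressible in ASNP coincides, up to equivalence, with the class of CSPs of first-order reducts of binary finitely bounded homogeneous structures. A ``complexity dichotomy'' for a class of decision problems means that every problem in the class is either in P or NP-complete, so I would argue that the two classes of problems have a dichotomy simultaneously precisely because they are the same class of problems.

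First I would fix terminology: say that a class $\mathcal D$ of decision problems \emph{has a complexity dichotomy} if every member of $\mathcal D$ is in P or is NP-complete, and recall that the \emph{infinite-domain dichotomy conjecture} for first-order reducts of binary finitely bounded homogeneous structures is exactly the assertion that the class $\{\Csp(\bC) \mid \bC \text{ is such a reduct}\}$ has a complexity dichotomy. With this in place the corollary reduces to showing that these two collections of problems are equal as sets (up to the natural identification of equivalent problems).

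The two inclusions are supplied verbatim by the theorems. By Theorem~\ref{thm:first}, every problem $\llbracket \Phi \rrbracket$ expressible in ASNP equals $\Csp(\bC)$ for some first-order reduct $\bC$ of a binary finitely bounded homogeneous structure; hence every ASNP problem lies in the CSP class. Conversely, by Theorem~\ref{thm:find-asnp}, for every such reduct $\bC$ the problem $\Csp(\bC)$ is expressible in ASNP; hence every problem in the CSP class lies in the ASNP class. Thus the two classes coincide. For the forward direction of the biconditional, assume the infinite-domain dichotomy conjecture holds for these reducts: then every CSP in the class is in P or NP-complete, and since every ASNP problem is such a CSP, ASNP has a dichotomy. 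For the backward direction, assume ASNP has a dichotomy: given any first-order reduct $\bC$ as above, Theorem~\ref{thm:find-asnp} gives an ASNP sentence defining $\Csp(\bC)$, so $\Csp(\bC)$ is in P or NP-complete, establishing the conjecture for the class.

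I do not expect a genuine obstacle here, since the content has been front-loaded into Theorems~\ref{thm:first} and~\ref{thm:find-asnp}; the corollary is a formal matching of two equal problem classes. The only point requiring a little care is that ``dichotomy'' is a property of a \emph{class} of problems rather than of a single sentence or structure, so one must phrase the equivalence at the level of the whole classes and ensure that the identification respects the equivalence $\llbracket\Phi\rrbracket = \Csp(\bC)$ used to pass between an ASNP sentence and its associated CSP. Since each theorem produces the relevant object uniformly (an ASNP problem is literally a CSP, and every such CSP is literally an ASNP problem), this identification is immediate and the biconditional follows.
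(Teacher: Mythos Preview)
Your proposal is correct and is exactly the intended argument: the paper states the corollary without proof because it is immediate from Theorems~\ref{thm:first} and~\ref{thm:find-asnp}, which together show that the problems expressible in ASNP are precisely the CSPs of first-order reducts of binary finitely bounded homogeneous structures. Your write-up of the two directions of the biconditional is accurate and nothing further is needed.
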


%\section{Finitely Bounded Homogeneous Structures}
 
 \section{Deciding Amalgamation}
\label{sect:amalgamation-decidable}
In this section we show how to algorithmically decide whether
a given existential second-order sentence is
in ASNP. 
The following is a known fact in the model theory of homogeneous structures (the first author has learned the fact from Gregory Cherlin), but we are not aware of any published proof in the literature. 

\begin{theorem}\label{thm:decide-amalg}
Let ${\mathcal F}$ be a finite set of finite binary relational $\tau$-structures. 
There is an algorithm that decides whether
$\Forbe({\mathcal F})$ has the amalgamation property. 
%$\mathcal C$ be a cla
\end{theorem}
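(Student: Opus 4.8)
The plan is to turn the amalgamation property into an effectively checkable, local condition, using heavily that the signature is binary. Write $\mathcal{C} := \Forbe(\mathcal{F})$ and let $m$ be the maximal cardinality of a structure in $\mathcal{F}$. Since $\mathcal{C}$ is closed under isomorphism and induced substructures, a routine argument that adds the points of $B_2 \setminus B_0$ one at a time shows that $\mathcal{C}$ has the AP if and only if it has \emph{single-point amalgamation}: every amalgamation diagram $(\bB_0; \bB_1, \bB_2)$ in $\mathcal{C}$ with $|B_2 \setminus B_0| = 1$ has an amalgam in $\mathcal{C}$. Moreover, if any amalgam exists then the substructure induced on the union of the two embedded copies is already one, so I may always restrict attention to amalgams whose domain is exactly $B_1 \cup B_2$.

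First I would recast single-point amalgamation as a \emph{completion} problem. Writing $B_2 = B_0 \cup \{b\}$, binarity guarantees that the only tuples of a candidate amalgam on $B_1 \cup \{b\}$ not already fixed by $\bB_1$ and $\bB_2$ are the binary tuples between $b$ and the vertices of $B_1 \setminus B_0$. Hence an amalgam is exactly a choice, for each $x \in B_1 \setminus B_0$, of one of the finitely many edge types between $b$ and $x$ compatible with the unary type of $b$. Since $\bB_1 \in \mathcal{C}$, any copy of a forbidden $\mathfrak{F} \in \mathcal{F}$ produced by such a choice must use $b$ together with at most $m-1$ vertices of $\bB_1$, so every forbidden choice is witnessed on a set of bounded size. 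In particular, for a \emph{fixed} finite diagram the existence of an amalgam is decidable; consequently the statement ``$\mathcal{C}$ does not have the AP'' is semi-decidable, by enumerating all finite diagrams and testing each for being a gap.

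The difficulty, and the step I expect to be the real obstacle, is to make the search finite, that is, to certify that the AP \emph{holds}. One would like a computable bound $N(\mathcal{F})$ such that the AP holds as soon as all diagrams of size at most $N(\mathcal{F})$ amalgamate. Here $N(\mathcal{F})$ cannot be taken close to $m$: the completion problem above is in general an ordinary finite-domain constraint satisfaction problem, and amalgamation can fail only through globally coordinated obstructions, in the way that a long odd cycle fails to be $2$-colourable, or a high-girth graph fails to be $3$-colourable. Thus the \emph{minimal} failing diagrams are unbounded even for a fixed $\mathcal{F}$, and ruling out small gaps forces one to enlarge $\mathcal{F}$. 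Turning this trade-off into a quantitative, computable bound on the \emph{smallest} gap is the crux; the natural tools are a compactness/pigeonhole argument exploiting that a binary signature has only finitely many edge types and that each blocking configuration involves at most $m-1$ vertices, together with the fact that the \Fresse\ limit of $\mathcal{C}$, if it exists, is $\omega$-categorical and hence should be finitely controlled.

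Once such an $N(\mathcal{F})$ is in hand the algorithm is immediate: enumerate the finitely many diagrams $(\bB_0; \bB_1, \bB_2)$ with $|B_0|, |B_1|, |B_2| \le N(\mathcal{F})$ up to isomorphism, for each test whether some completion with domain $B_1 \cup B_2$ avoids $\mathcal{F}$, and declare that $\Forbe(\mathcal{F})$ has the AP precisely if all of them do. Equivalently, and perhaps more robustly, one may establish decidability by exhibiting a finitely verifiable certificate for the positive case, for instance a bounded amalgamation strategy read off from the $\omega$-categorical \Fresse\ limit, so that the AP and its negation become both semi-decidable.
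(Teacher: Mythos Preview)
Your reduction stops one step too early, and that missing step is precisely where the bound comes from. You reduce to diagrams with $|B_2 \setminus B_0| = 1$ while leaving $B_1 \setminus B_0$ arbitrary; this is why you face a CSP whose variables are the points of $B_1 \setminus B_0$ and then worry, quite reasonably, about odd-cycle or high-girth style global obstructions. But the same routine inductive argument, applied to the other factor, reduces further to diagrams in which \emph{both} $|B_1 \setminus B_0| = 1$ and $|B_2 \setminus B_0| = 1$. This two-sided version is what ``1-point amalgamation'' normally means, and it is what the paper uses.

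Once both sides add a single point, say $p$ and $q$, binarity means that a candidate amalgam on $B_1 \cup B_2$ is determined by one datum only: the isomorphism type of the two-element substructure on $\{p,q\}$. If $\ell$ is the number of such types, there are at most $\ell$ candidates. If the diagram fails, then for each of these $\ell$ choices some $\bF \in \mathcal{F}$ embeds into the completion, and since $\bB_1, \bB_2 \in \mathcal{C}$ that copy must use both $p$ and $q$ together with at most $m-2$ points of $B_0$. Collecting the $\ell$ witness sets gives a subset of $B_0$ of size at most $\ell(m-2)$, and the sub-diagram it induces (together with $p$ and $q$) still fails to amalgamate. Hence $N(\mathcal{F}) = \ell(m-2) + 2$ works; your assertion that minimal failing diagrams are unbounded is false, the CSP analogy evaporates, and no compactness, $\omega$-categoricity, or certificate machinery is needed. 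This is exactly the paper's argument.
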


\begin{proof}
Let $m$ be the maximal size of a structure in $\mathcal F$, and let $\ell$ be the number of isomorphism types of two-element structures in $\Age(\bB)$. It is well-known and easy to prove
that ${\mathcal C} := \Forbe({\mathcal F})$ has the amalgamation property
if and only if it has the so-called \emph{1-point amalgamation property}, i.e., 
the amalgamation property restricted to diagrams
$(\bB_1,\bB_2)$ where $|B_1| = |B_2| = |B_1 \cap B_2|$. 
Suppose that $(\bB_1,\bB_2)$
%(\bB_0,\bB_1,\bB_2,e_1,e_2)$ 
is such an amalgamation diagram without amalgam. 
%Let $\bB_1'$ be a maximal substructure of $\bB_1$ that contains $B_1 \cap B_2$ such that $(\bB_1',\bB_2)$ has an amalgam. 
%Let $\bB_2'$ be a maximal substructure
%of $\bB_2$ that contains $B_1 \cap B_2$ such that $(\bB_1',\bB_2')$ has an amalgam. Then $B_i \neq B_i'$ for some
%$i \in \{1,2\}$; let $\bC_1$ be a substructure of $\bB_i$ that extends $\bB_i'$ by one element, and let $\bC_2 := \bB_{3-i}'$. 
%Then $(\bC_1,\bC_2)$ is a 1-point amalgamation diagram without an amalgam. 
Let $B_0 := B_1 \cap B_2$.
Let $B_1 \setminus B_0 = \{p\}$ and $B_2 \setminus B_0 = \{q\}$. 
%Let $\bD$ be a two-element structure in $\Age(\bB)$, and 
%Let $\bD$ be the 
Let $\bD$ be a $\tau$-structure $\bD$ with domain
$B_1 \cup B_2$ such that $\bB_1$ and $\bB_2$ are substructures of $\bD$.
Since $\bD$ by assumption is not 
an amalgam for $(\bB_1,\bB_2)$,
there must exist 
$A = \{a_1,\dots,a_{m-2}\} \in B_0$ 
such that the substructure of $\bD$ 
induced by $\{a_1,\dots,a_{m-2},p,q\}$
embeds a structure from $\mathcal F$. 

Note that the number of such $\tau$-structures $\bD$ is bounded by $\ell$ since they
only differ by the substructure induced by $p$ and $q$. So let $A_1,\dots,A_{\ell} \subseteq B_0$ be a list of sets witnessing that all of these structures $\bD$ embed a structure from $\mathcal F$. 
Let $\bC_1$ be the substructure of $\bB_1$ induced by $\{p\} \cup A_1 \cup \cdots \cup A_{\ell}$ and $\bC_2$ be the substructure of $\bB_2$ induced by $\{q\} \cup A_1 \cup \dots \cup A_{\ell}$. Suppose for contradiction that $(\bC_1,\bC_2)$ has an amalgam $\bC$; we may assume that 
this amalgam is of size at most $(m-2) \cdot \ell$. 
 Depending on the two-element 
structure induced by $\{p,q\}$ in $\bC$, there 
exists an $i \leq \ell$ such that the structure induced by $\{p,q\} \cup A_i$ in $\bC$ embeds a structure from $\mathcal F$, a contradiction. 
\qed 
\end{proof}

\begin{corollary}
There is an algorithm that decides for a given existential second-order sentence $\Phi$ whether it is in ASNP. 
\end{corollary}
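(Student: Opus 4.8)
The plan is to reduce the decidability of membership in ASNP to the three defining conditions of Definition~\ref{def:asnp}, checking each syntactically and then invoking Theorem~\ref{thm:decide-amalg} for the only non-obvious part.

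First I would observe that given an existential second-order sentence $\Phi$, we can algorithmically put it into the normal form $\exists R_1,\dots,R_k\,\forall x_1,\dots,x_n\colon\phi$ with $\phi$ quantifier-free (rejecting $\Phi$ if it is not of this shape, i.e.\ if it is not an SNP sentence at all). The first two bullets of Definition~\ref{def:asnp} are purely syntactic and therefore trivially decidable: we check that every existentially quantified relation symbol $R_i$ is binary, and we check, after rewriting $\phi$ into conjunctive normal form, that each conjunct is either a $\{R_1,\dots,R_k\}$-formula or of the form $S(x_1,\dots,x_k)\Rightarrow\psi(x_1,\dots,x_k)$ with $S\in\tau$ and $\psi$ a $\{R_1,\dots,R_k\}$-formula. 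Both are finite inspections of the parse tree of $\phi$.

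The substantive step is the third bullet: deciding whether the class $\mathcal{C}$ of $\{R_1,\dots,R_k\}$-reducts of the finite models of $\phi$ is an amalgamation class. Here I would use the remark following Definition~\ref{def:asnp}, which tells us that because ASNP sentences are monotone, this class $\mathcal{C}$ is exactly the class of finite $\{R_1,\dots,R_k\}$-structures satisfying those conjuncts of $\phi$ that contain no symbol from $\tau$. Collecting these conjuncts gives a universal $\{R_1,\dots,R_k\}$-sentence $\chi$, and $\mathcal{C}=\llbracket\chi\rrbracket$ is thus a finitely bounded class: from $\chi$ we can compute a finite set $\mathcal{F}$ of finite binary $\{R_1,\dots,R_k\}$-structures with $\mathcal{C}=\Forbe(\mathcal{F})$, simply by enumerating the finitely many structures up to the quantifier rank of $\chi$ and selecting the minimal forbidden ones. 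Since $\mathcal{C}$ is automatically closed under induced substructures and isomorphisms, being an amalgamation class amounts exactly to having the amalgamation property, and because the signature $\{R_1,\dots,R_k\}$ is binary, Theorem~\ref{thm:decide-amalg} supplies an algorithm deciding whether $\Forbe(\mathcal{F})$ has the AP.

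I expect the main obstacle to be the bookkeeping that turns the third bullet into an input for Theorem~\ref{thm:decide-amalg}, rather than any deep mathematics: one must argue carefully that the class $\mathcal{C}$ really is finitely bounded over the \emph{binary} signature $\{R_1,\dots,R_k\}$ and exhibit the forbidden set $\mathcal{F}$ effectively. The monotonicity observation is what makes this clean, since it lets us ignore the $\tau$-symbols entirely and work with a genuinely binary finitely bounded class, to which the earlier theorem applies directly. Once $\mathcal{F}$ is computed, the decision procedure is the conjunction of the three checks above, which proves the corollary.
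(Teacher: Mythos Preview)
Your proposal is correct and follows essentially the same route as the paper: compute a finite set $\mathcal{F}$ of forbidden binary structures from the universal $\{R_1,\dots,R_k\}$-part of $\phi$ and then invoke Theorem~\ref{thm:decide-amalg}. Your write-up is in fact more careful than the paper's terse proof, which skips the syntactic checks and (apparently by typo) cites Theorem~\ref{thm:first} rather than Theorem~\ref{thm:decide-amalg}.
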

\begin{proof}
%Let $\phi$ be the first-order part of $\Phi$, and 
Let $k$ be the maximal number of variables per clause in the first-order part $\phi$ of $\Phi$, and
let $\mathcal F$ be the set of all structures at most  
the elements $\{1,\dots,k\}$ that do not satisfy $\phi$. Then $\Forb({\mathcal F}) = \llbracket \phi \rrbracket$ and the result follows from Theorem~\ref{thm:first}. 
\end{proof}

 \section{Guarded Monotone SNP}
 \label{sect:gmsnp}
 In this section we revisit an expressive 
 generalisation of MMSNP introduced by Bienvenu, ten Cate, Lutz, and Wolter~\cite{OBDA} in the context of ontology-based data access, called
 \emph{guarded monotone SNP (GMSNP)}. 
 It is equally expressive as the logic MMSNP$_2$ introduced by Madelaine~\cite{MadelaineUniversal}\footnote{MMSNP$_2$ relates to MMSNP as Courcelle's MSO$_2$ relates to MSO~\cite{EngelfrietCourcelle}.}. 
 We will see that every GMSNP sentence is equivalent to a finite disjunction of
 \emph{connected} GMSNP sentences (Proposition~\ref{prop:disconnected}),
 each of which lies in ASNP  if the signature is binary (Theorem~\ref{thm:gmsnp}). 
 
 \begin{definition}
A monotone SNP $\tau$-sentence $\Phi$ with existentially quantified relations $\rho$ is called 
\emph{guarded} if each conjunct of $\Phi$ 
can be written in the form
\begin{align*}
\alpha_1 \wedge \cdots \wedge \alpha_n \Rightarrow \beta_1 \vee \cdots \vee \beta_m,
\quad \text{where}
\end{align*}
\begin{itemize}
\item $\alpha_1,\dots,\alpha_n$ are atomic $(\tau \cup \rho)$-formulas, called \emph{body atoms},
\item $\beta_1,\dots,\beta_m$ are 
atomic $\rho$-formulas, called \emph{head atoms},
\item for every head atom $\beta_i$ there is a body atom $\alpha_j$ such that $\alpha_j$ contains all variables from $\beta_i$
(such clauses are called \emph{guarded}).  
\end{itemize}
We do allow the case that $m=0$, i.e., the case where the head consists of the empty disjunction,  which is equivalent to $\perp$ (false). 
\end{definition}

%The following can be shown similarly as 
%Theorem~\ref{thm:connected}. 
%The following statement for SNP instead of
%guarded SNP can be found in~\cite{Bodirsky-HDR}; the proof is very similar to the proof given there and can be found in Appendix~\ref{sect:a1}. 
%\begin{theorem}
%For every structure $\bB$, 
%if $\Csp(\bB)$ is in guarded SNP then it is also
%in connected guarded SNP. Conversely, 
%if a problem is in connected guarded SNP
%then it describes a problem of the form
%$\Csp(\bB)$ for some relational structure $\bB$. 
%\end{theorem}

%Note that all conjuncts of 
%ASNP sentences are guarded except 
%for the conjunct $T(x,y) \wedge T(y,z) \Rightarrow T(x,z)$. 
%in (\ref{eq:2}) in Definition~\ref{def:asnp}
%A guarded SNP sentence is called
%\emph{connected} IS CLEAR
Our next proposition is well-known for MMSNP
and can be extended to guarded SNP, too.
See Appendix~\ref{sect:a4} for the proof. 

\begin{proposition}\label{prop:disconnected}
Every GMSNP sentence $\Phi$ is equivalent to a finite disjunction $\Phi_1 \vee \cdots \vee \Phi_k$ of connected GMSNP sentences. 
\end{proposition}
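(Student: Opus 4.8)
The plan is to reduce the whole sentence to the problem of splitting a single disconnected clause, and then to push the resulting disjunctions to the front of the sentence. First I would write the first-order part of $\Phi$ in conjunctive normal form $\bigwedge_j C_j$, with each clause $C_j$ in the guarded implication form $\alpha_1 \wedge \cdots \wedge \alpha_n \Rightarrow \beta_1 \vee \cdots \vee \beta_m$. Since distinct clauses can use disjoint tuples of universally quantified variables, the universal quantifier distributes over the conjunction, so that $\Phi$ is equivalent to $\exists \bar R \, \bigwedge_j \big(\forall \bar{x}_j \colon C_j\big)$.

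The combinatorial heart of the argument is the observation that a universally quantified, variable-disjoint disjunction factors: if $C = C_1(\bar y_1) \vee \cdots \vee C_r(\bar y_r)$ where the $\bar y_t$ are pairwise disjoint, then, in any fixed $\rho$-expansion, $\forall \bar y \colon C$ holds if and only if $\forall \bar y_t \colon C_t$ holds for some $t$ (if each $C_t$ failed on some assignment, these assignments could be combined, the variables being disjoint, to refute $C$). I would apply this to each clause $C_j$ by taking the connected components of its clause structure from Definition~\ref{def:mmc}, whose hyperedges are exactly the (negated) body atoms $\alpha_1,\dots,\alpha_n$; these components $C_{j,1},\dots,C_{j,r_j}$ partition the variables of $C_j$. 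Guardedness is what makes this decomposition respect the head: each head atom $\beta_i$ has a guard among the body atoms, so all variables of $\beta_i$ lie in a single component, and consequently each $C_{j,t}$ is again a guarded---and by construction connected---clause. Thus $\forall \bar x_j \colon C_j$ is equivalent, model by model, to $\bigvee_{t=1}^{r_j} \big(\forall \bar x_{j,t} \colon C_{j,t}\big)$, and this replacement is valid inside the shared existential prefix.

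After this replacement $\Phi$ becomes $\exists \bar R \, \bigwedge_j \bigvee_{t} \big(\forall \bar x_{j,t} \colon C_{j,t}\big)$. I would then distribute the (finite) conjunction over the (finite) disjunctions to obtain $\exists \bar R \, \bigvee_{f} \bigwedge_j \big(\forall \bar x_{j,f(j)} \colon C_{j,f(j)}\big)$, where $f$ ranges over the choice functions selecting one component per clause, and finally pull the existential quantifier through the disjunction. The result is a finite disjunction $\Phi_1 \vee \cdots \vee \Phi_k$, with $k = \prod_j r_j$, where each $\Phi_f = \exists \bar R \, \forall \bar x \colon \bigwedge_j C_{j,f(j)}$ is monotone, guarded, and has only connected clauses---hence a connected GMSNP sentence.

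The step I expect to require the most care is the verification that the component decomposition preserves guardedness and yields the notion of connectedness from Definition~\ref{def:mmc}: one must check that the clause structure is built only from the negated body atoms, that every head atom is absorbed into the component of its guard, and that no head atom is thereby stranded across two components. Everything else is a routine application of the distributive laws for the propositional connectives together with the fact that the universal quantifier commutes with conjunction and factors over variable-disjoint disjunctions.
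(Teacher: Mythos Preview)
Your proposal is correct and follows essentially the same approach as the paper: split each disconnected clause along its connected components (using guardedness to ensure head atoms stay within a component), use the fact that a universally quantified variable-disjoint disjunction factors as a disjunction of universally quantified formulas, and then distribute and pull the existential prefix through. The only cosmetic difference is that the paper performs a binary split on one disconnected clause and iterates, whereas you decompose every clause into all its components in one pass and then distribute; your write-up is in fact more explicit than the paper's on the key ``variable-disjoint factoring'' step.
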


It is well-known and easy to see~\cite{FoniokNesetril} that 
each of $\Phi_1,\dots,\Phi_k$ can be reduced to $\Phi$ in polynomial time. Conversely, if each
of $\Phi_1,\dots,\Phi_k$ is in P, then
$\Phi$ is in P, too. 
It follows in particular that if connected GMSNP has a complexity dichotomy into P and NP-complete, then so has GMSNP. 
 
 \begin{theorem}\label{thm:gmsnp}
For every sentence $\Phi$ in connected GMSNP 
there exists a reduct $\bC$ of a finitely bounded
homogeneous structures such that $\llbracket \Phi \rrbracket = \Csp(\bC)$. If all \red{existentially quantified} relation symbols in $\Phi$ are binary then it is equivalent to an ASNP sentence.  
\end{theorem}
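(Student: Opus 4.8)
The plan is to prove the two assertions in turn: first construct, for any connected GMSNP sentence, a finitely bounded homogeneous structure whose $\tau$-reduct has the prescribed CSP, and then, in the binary case, arrange this construction so that the base structure is binary, which lets us invoke Theorem~\ref{thm:find-asnp}.

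For the first assertion, write $\Phi = \exists R_1,\dots,R_k\,\forall x_1,\dots,x_n\colon\phi$ with $\rho = \{R_1,\dots,R_k\}$, and view a model of the universal part $\phi$ as a $(\tau\cup\rho)$-structure. The obstruction to finite boundedness is that, owing to the head disjunctions, a failed clause forbids a \emph{pattern} in which some $\rho$-head atoms are \emph{absent}, rather than a substructure. I would remove this obstruction by a Feder--Vardi-style normalisation adapted to the guarded setting: introduce auxiliary existential relations that make the $\rho$-colouring \emph{total} on every guarded tuple, so that an ``absent head atom'' is recorded as the presence of a complementary relation. After this step the clauses forbid genuine substructures, and the class $\mathcal K$ of admissible colored structures is finitely bounded. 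The heart of the argument is to show that $\mathcal K$ is an amalgamation class. Rather than amalgamating freely---a connected forbidden pattern can still be glued through the shared part---I would complete an amalgamation diagram by leaving the cross tuples free of $\tau$-relations and determining their $\rho$-colours from the normalised clauses, and then use guardedness together with connectedness of the clauses to show that no forbidden pattern is created. Granting amalgamation, let $\bB$ be the \Fresse-limit of $\mathcal K$, a finitely bounded homogeneous structure, and let $\bC$ be its $\tau$-reduct. That $\Csp(\bC) = \llbracket\Phi\rrbracket$ follows as in the proof of Theorem~\ref{thm:connected}: a homomorphism $\bA\to\bC$ pulls the $\rho$-relations of $\bB$ back to a $\rho$-expansion of $\bA$ satisfying $\phi$ (using monotonicity for the $\tau$-literals), and conversely any $\rho$-expansion of $\bA$ satisfying $\phi$ lies in $\mathcal K$ and hence embeds into $\bB$, giving a homomorphism $\bA\to\bC$.

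For the binary assertion, suppose every symbol of $\rho$ is binary. Then I would carry out the same construction while keeping the base binary: the colour type of a guarded tuple is already determined by its binary $\rho$-restrictions, so the auxiliary relations of the normalisation may be chosen binary, and $\mathcal K$ can be taken over a binary signature with \Fresse-limit a \emph{binary} finitely bounded homogeneous structure $\bB$. The relations of $\tau$ may have arbitrary arity, but I define $\bC$ on the domain of $\bB$ by letting $S^{\bC}$ consist of those $\tau$-tuples admitting a valid colouring; this relation is invariant under $\Aut(\bB)$, hence---as $\bB$ is $\omega$-categorical---first-order definable over $\bB$, so $\bC$ is a \emph{first-order reduct} of the binary finitely bounded homogeneous structure $\bB$. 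Theorem~\ref{thm:find-asnp} then shows that $\Csp(\bC) = \llbracket\Phi\rrbracket$ is expressible in ASNP.

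The main obstacle is the amalgamation step, and in particular the normalisation that precedes it: one must choose the auxiliary relations and the completion of amalgamation diagrams so that the colored class is provably closed under amalgamation, which is exactly where the guardedness and connectedness hypotheses are needed. In the binary case there is the additional constraint that all auxiliary relations stay binary, so that $\bB$ remains binary and Theorem~\ref{thm:find-asnp} stays applicable; reconciling this with the possibly high arity of $\tau$---which survives only as first-order definitions over $\bB$ rather than as relations of the base---is the delicate part of the construction.
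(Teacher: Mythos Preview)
The gap is precisely the amalgamation step you yourself flag as ``the main obstacle.'' The paper does not try to show that any normalised coloured class is an amalgamation class. Instead, after the rewriting $R\mapsto R^{+}$, $\neg R\mapsto R^{-}$ for the existential symbols and $R\mapsto R'$ for $R\in\tau$, it collects the finitely many $\sigma$-structures $\mathcal F$ that falsify the rewritten first-order part and invokes the theorem of Hubi\v{c}ka and Ne\v{s}et\v{r}il (a strengthening of Cherlin--Shelah--Shi): this supplies a finitely bounded homogeneous $(\sigma\cup\rho)$-structure $\bB$, for an \emph{additional} signature $\rho$, whose $\sigma$-reduct is homomorphism-universal for the finite $\sigma$-structures into which no member of $\mathcal F$ maps. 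The content of that theorem is exactly that such classes are in general not amalgamation classes as they stand and must be completed by the extra relations $\rho$; this completion is the ingredient your plan is missing. The target $\bC$ is then read off from $\bB$ via a ``correctly labelled'' restriction of the $\tau$-relations, and guardedness is used only at the very end, to show that colours pulled back along a homomorphism satisfy each clause.

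Your direct-amalgamation sketch---leave cross tuples $\tau$-free and choose $\rho$-colours on them---may or may not be salvageable, but you give no argument that the colour choices can be made consistently across all clauses simultaneously; several guarded clauses can route through the common part and impose conflicting $R^{+}/R^{-}$ demands on the same cross pair, and neither guardedness nor connectedness obviously rules this out. Absent an actual proof here (which would amount to reproving a tailored instance of Hubi\v{c}ka--Ne\v{s}et\v{r}il), neither assertion of the theorem is established. The same gap propagates to the binary case: you need a concrete binary finitely bounded homogeneous $\bB$ before Theorem~\ref{thm:find-asnp} can be applied, and your outline has not produced one.
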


In the proof of Theorem~\ref{thm:gmsnp} we use
a result of Cherlin, Shelah, and Shi~\cite{CherlinShelahShi} in a strengthened form due
to Hubi\v{c}ka and Ne\v{s}et\v{r}il~\cite{Hubicka-Nesetril}, namely
that for every finite set $\mathcal F$ of finite $\sigma$-structures, for some finite relational signature $\sigma$, 
there exists a finitely bounded homogeneous $(\sigma \cup \rho)$-structure $\bB$ such that 
a finite $\sigma$-structure $\bA$ homomorphically maps to $\bB$ if none of the structures in $\mathcal F$ 
homomorphically maps to $\bB$.
We now prove Theorem~\ref{thm:gmsnp}. 

\begin{proof}
Let $\Phi$ be a $\tau$-sentence in connected guarded monotone SNP with existentially quantified relation symbols $\{E_1,\dots,E_k\}$. 
Let $\sigma$ be the signature which 
contains for every relation symbol $R \in \{E_1,\dots,E_k\}$ two new relation symbols $R^+$ and $R^-$ of the same arity and for
every relation symbol $R \in \tau$ a new relation
symbol $R'$. 
Let $\phi$ be the first-order part of $\Phi$, written in conjunctive normal form, and let $n$ be the number of variables in the largest clause of $\phi$. 
Let $\phi'$ be the sentence obtained from $\phi$ by replacing each occurrence of $R \in \{E_1,\dots,E_k\}$ by $R^+$ and 
each occurrence of $\neg R$ by $R^-$, and 
finally each occurrence of $R \in \tau$ by
$R'$. 
Let $\mathcal F$ be the (finite) class of all finite 
$\sigma$-structures with at most $n$ elements that do not satisfy $\phi'$. 
We apply the mentioned theorem of Hubi\v{c}ka and Ne\v{s}et\v{r}il to $\mathcal F$, and obtain a finitely bounded homogeneous $\sigma \cup \rho$-structure $\bB$ such that the age of the
  $\sigma$-reduct $\bC$ of $\bB$ equals $\Forbe({\mathcal N})$. 
  	We say that $S \subseteq B$ is
	\emph{correctly labelled} if
	for every $R \in \{E_1,\dots,E_k\}$ of arity $m$ and
	$s_1,\dots,s_m \in S$ we have
	$R^-(s_1,\dots,s_m)$ if and only if 
	$\neg R(s_1,\dots,s_m)$. 	
	Let $\bB'$ the $\tau \cup \sigma \cup \rho$-expansion of $\bB$
	where $R \in \tau$ of arity $m$ denotes 
	$$\{(t_1,\dots,t_m) \in (R')^{\bB} \mid 
	\{t_1,\dots,t_m\} \text{ is correctly labelled}\}.$$
		Since $\bB$ is finitely bounded homogeneous, $\bB'$ is finitely bounded homogeneous, too. 
Let $\bC$ be the $\tau$-reduct of $\bB'$. 
  We claim that $\llbracket \Phi \rrbracket = \Csp(\bC)$. First suppose that $\bA$ is a finite $\tau$-structure that satisfies $\Phi$. Then it
  has an $\{E_1,\dots,E_k\}$-expansion $\bA'$ that satisfies $\phi$. Let $\bA''$ be the $\sigma$-structure with the same domain as $\bA'$
  where  
  \begin{itemize}
  \item $R'$ denotes $R^{\bA'}$ for each $R \in \tau$;
  \item $R^+$ denotes $R^{\bA'}$ for each $R \in \{E_1,\dots,E_k\}$;
  \item $R^{-}$ denotes $\neg R^{\bA'}$ for each $R \in \{E_1,\dots,E_k\}$. 
  \end{itemize}
  Then $\bA''$ satisfies $\phi'$, and hence embeds into $\bB$. This embedding
  is a homomorphism from $\bA$ to $\bC$
  since the image of the embedding is correctly labelled by the construction of $\bA''$. 
  
%  Let $(t_1,\dots,t_m) \in R^{\bA}$ for $R \in \tau$.
%  Then $(t_1,\dots,t_m) \in (R')^{\bA}$, 
  Conversely, suppose that $\bA$ has a homomorphism $h$ to $\bC$. 
  Let $\bA'$ be the $\tau \cup \{E_1,\dots,E_k\}$-expansion of $\bA$ by defining
  $(a_1,\dots,a_n) \in R^\bA$ if and only if
  $(h(a_1),\dots,h(a_n)) \in R^{\bB'}$, for every $n$-ary $R \in \{E_1,\dots,E_k\}$.
  Then each clause
  of $\phi$ is satisfied,
  because each clause of $\phi$ is guarded: 
  let $x_1,\dots,x_m$ be the variables of some
  clause of $\phi$. 
  If 
  %$\alpha \colon \{x_1,\dots,x_n\} \to A$
  $a_1,\dots,a_m \in A$ 
   %is an assignment that 
   satisfy the body of this clause,
  and $\psi(a_{i_1},\dots,a_{i_l})$ is a head atom
  of such a clause, then the set $\{h(a_{i_1}),\dots, h(a_{i_l})\}$ is correctly labelled.
   %Hence, for every $R \in \{E_1,\dots,E_k\}$ we have that $\bar a \in (R^+)^{\bA}$ if and only if $\bar a \notin (R^-)^{\bA}$. 
   This implies
   that some of the head atoms of the clause 
   must be true in $\bA'$ 
   because $\bB'$ satisfies $\phi'$. 
   The second statement follows from Theorem~\ref{thm:find-asnp}. 
\qed 
\end{proof}

The following example shows that GMSNP does
not contain ASNP. 
 
 \begin{example}
  $\Csp({\mathbb Q};<)$ is in ASNP (see Example~\ref{expl:q3}) but not
  in GMSNP.  
  Indeed, suppose that $\Phi$ is a GMSNP sentence which is true on all finite directed paths. 
We assume that the quantifier-free part $\phi$ of 
$\Phi$ is in conjunctive normal form.
Let $\rho$ be the existentially quantified relation symbols of $\Phi$, let $k := |\rho|$, 
and let $l$ be the number of variables in $\Phi$. 
A directed path of length $(2^{2kl}+1) l$, viewed as a $\{<\}$-structure, 
 satisfies $\Phi$, and therefore it has an $\{<\} \cup \rho$-expansion $\bA$ that satisfies $\phi$. Note that there are $L := 2^{2kl}$ different $\{<\} \cup \rho$-expansions of a path of length $l$ (for each vertex and each edge of the path we have to decide which of the $k$ predicates holds), 
 and hence there must be $i,j \in \{0,\dots, L\}$ with $i<j$ such that the substructures of $\bA$
 induced by $i l +1,i l+2,\dots,i l +l$
 and by $j l + 1,j l+2,\dots,j l+l$ are isomorphic. 
We then claim that the directed cycle
$(i+1) l + 1,(i+1) l + 2,\dots, j l + 1,\dots, jl + l, (i+1) l + 1$ satisfies $\Phi$: this is witnessed by 
the $\{<\} \cup \rho$-expansion inherited from $\bA$ which satisfies $\phi$. Hence, $\Phi$ does not express 
digraph acyclicity. \qed
 \end{example}

\section{Application: Instances of Bounded Treewidth}
\label{sect:btw}
If a computational problem can be formulated 
in ASNP or in GMSNP, then this has remarkable consequences besides a potential complexity dichotomy. In this section we show that 
every problem that can be formulated in ASNP
or in GMSNP is in P when restricted to 
instances of bounded treewidth. 
The corresponding result for Monadic Second-Order Logic (MSO) instead of ASNP is a famous theorem of Courcelle~\cite{EngelfrietCourcelle}. 
We strongly believe that ASNP is not contained in MSO (consider for instance the Betweenness Problem from Example~\ref{expl:betw-1}), so our result appears to be incomparable to Courcelle's. 

In the proof of our result, we need the following concepts from model theory.   A first-order theory $T$ is called \emph{$\omega$-categorical} if all countable models of $T$ are isomorphic~\cite{HodgesLong}. 
  A structure $\bB$ is called $\omega$-categorical
  if its first-order theory (i.e., the set of first-order sentences that hold in $\bB$) is $\omega$-categorical. Note that with this definition, finite structures are $\omega$-categorical. Another classic example is the structure $({\mathbb Q};<)$. The definition of treewidth can be treated as a black box in our proof, and we refer the reader to~\cite{BodDalJournal}.  

\begin{theorem}\label{thm:btw}
Let $\Phi$ be an ASNP or a connected GMSNP $\tau$-sentence and let $k \in \mathbb N$. Then the problem
to decide whether a given finite $\tau$-structure $\bA$ of treewidth at most $k$ satisfies $\Phi$ can be decided in polynomial time with a Datalog program (of width $k$). 
\end{theorem}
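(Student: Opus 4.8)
The plan is to reduce the statement to the known fact, established in Theorems~\ref{thm:first} and~\ref{thm:gmsnp}, that every ASNP or connected GMSNP sentence $\Phi$ describes a problem of the form $\Csp(\bC)$ where $\bC$ is a (first-order) reduct of a finitely bounded homogeneous structure $\bB$. The crucial extra ingredient is that such $\bC$ is $\omega$-categorical: a finitely bounded homogeneous structure is $\omega$-categorical by \Fresse's theorem (its automorphism group is oligomorphic, as the age has finitely many isomorphism types in each arity), and $\omega$-categoricity is preserved under passing to first-order reducts, since these have the same automorphism group. Thus the task becomes: show that $\Csp(\bC)$ for an $\omega$-categorical $\bC$ with finitely bounded homogeneous ``host'' can be solved in polynomial time on instances of treewidth at most $k$.

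First I would invoke the standard equivalence between bounded-treewidth solvability of CSPs and the power of Datalog. The canonical tool here is the $(j,k)$-consistency / canonical Datalog program: for a structure $\bC$ of finite relational signature, the canonical Datalog program of width $k$ correctly decides $\Csp(\bC)$ on all instances of treewidth at most $k$ provided that acceptance by the program coincides with homomorphism-existence on such instances. The key point, which I would state as a lemma, is that for instances $\bA$ of treewidth at most $k$ one has $\bA \to \bC$ if and only if every substructure of $\bA$ on at most $k{+}1$ elements maps to $\bC$ in a locally consistent way that glues along the tree decomposition. For \emph{finite} templates this is classical; the work is to transfer it to the $\omega$-categorical case.

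The main obstacle is precisely that $\bC$ is infinite, so the finite-template Datalog machinery does not apply verbatim, and one cannot simply enumerate the finitely many ``types'' of partial homomorphisms because the domain is infinite. The way through is to use $\omega$-categoricity: by the theorem of Ryll-Nardzewski, $\bC$ has only finitely many orbits of $k$-tuples under $\Aut(\bC)$ for every fixed $k$, equivalently finitely many primitive positive $k$-types realized in $\bC$. Hence the relevant ``local information'' needed to run a width-$k$ Datalog program — namely which conjunctions of atoms on $\le k{+}1$ variables are satisfiable in $\bC$, i.e.\ have a solution — ranges over a finite set that can be computed from the finite bound $\mathcal F$ defining $\bB$. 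I would therefore build the canonical width-$k$ Datalog program whose IDB predicates record the (finitely many) orbits of partial assignments, with rules derived from consistency across $k{+}1$-element subsets; correctness on bounded-treewidth instances follows because a tree decomposition lets one amalgamate the locally consistent partial maps using the Amalgamation Property of $\Age(\bB)$ into a single homomorphism into $\bB$, and the first-order definitions of the relations of $\bC$ over $\bB$ then certify $\bA \to \bC$.

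Finally I would assemble these pieces: $\omega$-categoricity guarantees finiteness of the orbit set and hence that the Datalog program has finitely many rules over a finite EDB/IDB signature, so it runs in polynomial time on any instance; soundness is immediate since each rule reflects a genuine necessary condition for a homomorphism; and completeness on treewidth-$\le k$ instances is the amalgamation argument above, which is exactly where finite boundedness and homogeneity of $\bB$ are used. For the connected GMSNP case the same argument applies with the finitely bounded homogeneous $\bB$ supplied by Theorem~\ref{thm:gmsnp} rather than by \Fresse-amalgamation of an explicit amalgamation class, so no separate treatment is needed beyond noting that $\bC$ is again $\omega$-categorical. I expect the amalgamation-based completeness step to be the technical heart, since it is where the interplay between the tree decomposition and the Amalgamation Property must be made precise.
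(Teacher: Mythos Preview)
Your high-level strategy coincides with the paper's: invoke Theorems~\ref{thm:first} and~\ref{thm:gmsnp} to obtain an $\omega$-categorical template $\bC$, then appeal to the bounded-treewidth/Datalog machinery for $\omega$-categorical CSPs. The paper, however, does not reprove that machinery: it simply cites Corollary~1 of Bodirsky--Dalmau (\cite{BodDalJournal} in the paper), which already states that for any $\omega$-categorical $\bC$, $\Csp(\bC)$ restricted to treewidth at most $k$ is decided by a width-$k$ Datalog program. So the ``technical heart'' you anticipate is entirely outsourced.

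Your sketch of that outsourced result is broadly on the right track (finitely many orbits of $k$-tuples via Ryll--Nardzewski, canonical Datalog over orbit information), but the completeness step you outline is not how the argument actually goes, and as stated it has a gap. You propose to ``amalgamate the locally consistent partial maps using the Amalgamation Property of $\Age(\bB)$''. The AP is a statement about gluing \emph{structures} in $\Age(\bB)$ over a common substructure; it says nothing directly about gluing \emph{homomorphisms from $\bA$ into a fixed $\bC$} that merely agree on overlaps. Two partial homomorphisms into $\bC$ that agree on an intersection need not extend to a common homomorphism, and the AP does not bridge this. The Bodirsky--Dalmau proof instead uses only $\omega$-categoricity (no amalgamation, no finite boundedness) and proceeds via existential pebble games / $(j,k)$-consistency: one shows that a winning Duplicator strategy on the tree decomposition can be converted into an actual homomorphism by an inductive choice along the tree, using that orbit-equivalent tuples are interchangeable. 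That argument applies to \emph{any} $\omega$-categorical template, which is why the paper needs nothing beyond $\omega$-categoricity of $\bC$.

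Two smaller points: $\omega$-categoricity of $\bB$ follows from homogeneity in a \emph{finite relational signature}, not from finite boundedness as you suggest; and a first-order reduct does not have the \emph{same} automorphism group as $\bB$ but a possibly larger one, which is still enough (more automorphisms means no more orbits) to preserve $\omega$-categoricity.
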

 \begin{proof}
  Since structures that are homogeneous in
 a finite relational language are $\omega$-categorical~\cite{HodgesLong} and first-order
 reducts of $\omega$-categorical structures are $\omega$-categorical~\cite{HodgesLong}, 
 Theorem~\ref{thm:first} and 
 Theorem~\ref{thm:gmsnp} 
 imply that 
 the problem to decide whether a finite
 $\tau$-structure satisfies $\phi$ can be formulated
 as CSP$(\bB)$ for an $\omega$-categorical structure $\bB$. 
 Then the statement 
 follows from Corollary~1 in~\cite{BodDalJournal}. 
 \qed \end{proof} 
 
\begin{remark} In Theorem~\ref{thm:btw} it actually suffices to assume that the \emph{core} of $\bA$ has treewidth at most $k$.
\end{remark}
 
 \begin{corollary}\label{cor:gsnp}
Let $\Phi$ be a GMSNP $\tau$-sentence and let $k \in {\mathbb N}$. Then 
 there is a polynomial-time algorithm that decides whether a given $\tau$-structure of treewidth at most $k$ satisfies $\Phi$.  
 \end{corollary}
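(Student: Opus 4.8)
The plan is to reduce the statement to the connected case that has already been settled in Theorem~\ref{thm:btw}. By Proposition~\ref{prop:disconnected}, the given GMSNP $\tau$-sentence $\Phi$ is equivalent to a finite disjunction $\Phi_1 \vee \cdots \vee \Phi_m$ of \emph{connected} GMSNP $\tau$-sentences, where the number $m$ of disjuncts depends only on $\Phi$ and not on the input. Consequently, a finite $\tau$-structure $\bA$ satisfies $\Phi$ if and only if $\bA \models \Phi_i$ for at least one $i \in \{1,\dots,m\}$.

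First I would invoke Theorem~\ref{thm:btw} on each connected disjunct $\Phi_i$ separately. Since $\Phi_i$ is a connected GMSNP sentence, that theorem yields a polynomial-time decision procedure (in fact a Datalog program of width $k$) that determines whether a given $\tau$-structure of treewidth at most $k$ satisfies $\Phi_i$. The essential point is that all $m$ of these procedures take the same input $\bA$, so no reduction between the individual $\Phi_i$ is required: only the semantics of disjunction is used.

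The algorithm for $\Phi$ then runs each of the $m$ procedures on the input structure $\bA$ and answers ``yes'' precisely when at least one of them accepts. Correctness is immediate from the equivalence $\Phi \equiv \bigvee_{i=1}^{m} \Phi_i$ supplied by Proposition~\ref{prop:disconnected}. Because $m$ is a fixed constant determined by $\Phi$ and each of the $m$ procedures runs in polynomial time on treewidth-$k$ instances, the overall running time is polynomial.

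I do not expect a genuine obstacle here, as the substantive work is already carried out in Proposition~\ref{prop:disconnected} (the decomposition into connected sentences) and in Theorem~\ref{thm:btw} (bounded-treewidth tractability of connected GMSNP). The only points worth stating explicitly are that Theorem~\ref{thm:btw} is formulated for \emph{connected} GMSNP sentences, which is exactly why the decomposition step is needed, and that one must keep the treewidth bound $k$ distinct from the number $m$ of disjuncts produced by Proposition~\ref{prop:disconnected}.
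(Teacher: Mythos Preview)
Your proof is correct and follows essentially the same approach as the paper: decompose $\Phi$ into connected GMSNP sentences via Proposition~\ref{prop:disconnected}, apply Theorem~\ref{thm:btw} to each, and accept if any one does. Your care in using a separate letter $m$ for the number of disjuncts is in fact an improvement over the paper, which overloads $k$ for both the treewidth bound and the number of disjuncts.
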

 
\begin{proof}
 By Theorem~\ref{prop:disconnected}
there are finitely many connected
GMSNP sentences $\Phi_1$, \dots, $\Phi_k$ such that 
$\llbracket \Phi \rrbracket = \bigcup_{i \leq k} \llbracket \Phi_i \rrbracket$. Theorem~\ref{thm:btw} 
implies that there is a polynomial-time algorithm to decide whether a given $\bA$
of treewidth at most $k$ satisfies $\Phi_i$, for every $i \leq k$. Hence, $\bA$ satisfies $\Phi$ if and only if this algorithm accepts $\Phi_i$ for some $i \leq k$. \qed 
 \end{proof}

 \section{Conclusion and Open Problems}
 %Amalgamation monotone SNP is a strict extension of connected MMSNP~\cite{FederVardi} and connected GMSNP~\cite{OBDA}  and 
 ASNP is a candidate for an
 expressive logic with a complexity dichotomy: 
 every problem in ASNP is NP-complete or in P if and only if the infinite-domain dichotomy conjecture for \red{first-order} reducts of binary finitely bounded homogeneous structures holds. See Figure~\ref{fig:SNP}
 for the relation to other candidate logics 
 that are known to have a dichotomy, 
 might have a complexity, or provably do not have 
 a dichotomy. 
 \begin{figure}
 \begin{center}
 \includegraphics[scale=.4]{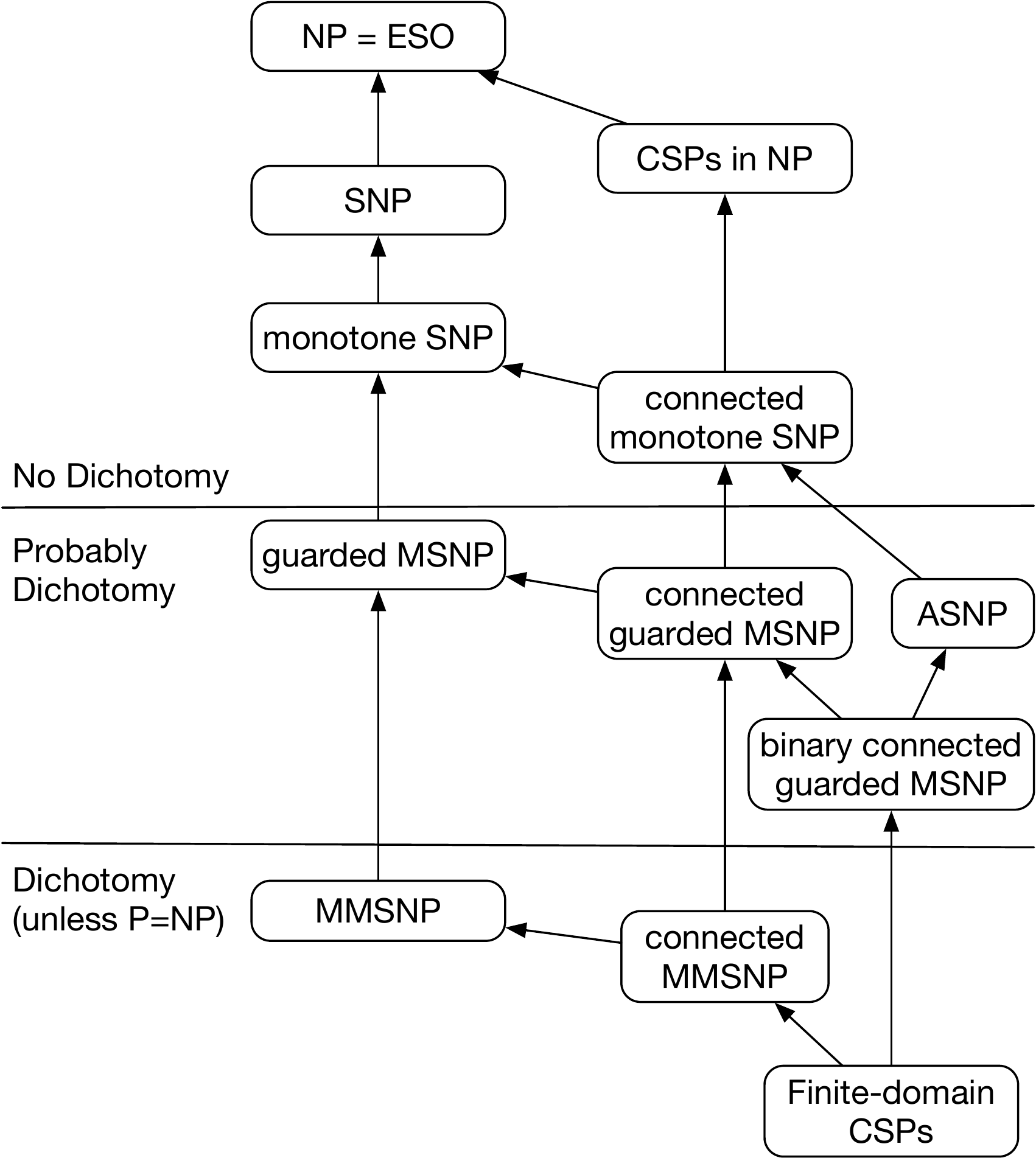}
 \end{center}
\caption{Fragments of existential second-order logic and complexity dichotomies.}
\label{fig:SNP}
 \end{figure}
 We presented an application of ASNP concerning
 the evaluation of computational problems on classes of structures of bounded treewidth. 
 We also proved 
 that the syntax of ASNP is algorithmically decidable. 
 %We have not determined
 %the precise computational complexity of checking the syntax of ASNP, and leave this for future work.  
 The following problems concerning ASNP are open. 

 \begin{enumerate}
\item Is the Amalgamation Property decidable for 
(not necessarily binary) classes given by finitely many forbidden substructures? 
%\item Suppose that $\bB$ is an $\omega$-categorical
%structure with finite binary relational signature 
%such that
%$\Csp(\bB)$ is in SNP. 
%Is $\Csp(\bB)$ in ASNP? (It is known that
% $\Csp(\bB)$ is in connected monotone SNP~\cite{FederVardi,Bodirsky-HDR}.) 
% BINARY DOESN'T HELP: dual encoding
% (preserves containment in SNP)
% Then potential counterexample: 
% relational version for atomless Boolean 
% algebra. 
 \item 
 Is every binary CSP in Monadic Second-Order Logic (MSO) also in ASNP? 
 \item Can we decide algorithmically whether a given ASNP sentence is equivalent (over finite structures) to a fixed-point logic sentence (this then implies that the problem is in P)? We refer to~\cite{RydvalFP}
 for a recent article on the power of fixed-point logic for infinite-domain CSPs. 
 \item Is every problem in NP polynomial-time equivalent to a problem in Amalgamation SNP (without monotonicity)? 
 \item Is there a natural logic (which in particular has an effective syntax) that contains both ASNP and connected GMSNP and which describes CSPs for reducts of finitely bounded homogeneous structures? 
 \end{enumerate}
 
\bibliographystyle{abbrv} 
\bibliography{../../../global}

\newpage
\appendix

\section{Proofs for Connected Monotone SNP}
\label{sect:a1}
To prove Theorem~\ref{thm:connected}, suppose first that $\Phi$ is a connected monotone SNP sentence.
To show that $\Phi$ describes a problem of the form $\Csp(\bB)$ it suffices to show that the class of structures that satisfy $\Phi$ is closed under disjoint unions and inverse homomorphisms.
Let $\Phi$ be of the form $\exists R_1,\dots,R_k \; \forall x_1,\dots,x_l \colon  \phi$ where $\phi$ is a quantifier-free first-order $\sigma$-formula
where $\sigma := \tau \cup \{R_1,\dots,R_k\}$.

Suppose that $\bA_1$ and $\bA_2$ are $\tau$-structures that satisfy $\Phi$.
In other words, there is a $\sigma$-expansion $\bA_1^*$ of $\bA_1$ 
and a $\sigma$-expansion $\bA_2^*$ of $\bA_2$
such that these expansions satisfy $\forall \bar x \colon \phi$.
We claim that the disjoint union $\bA^*$ of $\bA_1^*$ and $\bA_2^*$ also satisfies $\forall \bar x \colon \phi$;
otherwise, there would be a clause $\psi$ in $\phi$ and elements $a_1,\dots,a_q$
of $A_1 \cup A_2$ such that $\psi(a_1,\dots,a_q)$ is false in $\bA^*$.
Since $\bA_1^*$ and $\bA_2^*$ satisfy $\forall \bar x \colon \psi$,
there must be $i,j$ such that $a_i \in A_1$ and $a_j \in A_2$. But then $\psi$ is disconnected,
a contradiction. 
Closure under inverse homomorphism follows
from monotonicity. 

For the second part of the statement, 
suppose that $\Phi$ describes a problem
of the form $\Csp(\bB)$ for some infinite structure $\bB$.
In particular, the class of structures that satisfy $\Phi$ is closed
under inverse homomorphisms. 
Then it follows from results of Feder and Vardi~\cite{FederVardiNegation} that $\Phi$ is equivalent to a monotone
SNP sentence. Moreover, the class of structures that satisfy $\Phi$ is closed under disjoint unions. 
Consider the SNP sentence $\Psi = \exists R_1,\dots,R_k,E \; \forall x_1,\dots,x_l\colon  \psi$ where $\psi$ is the conjunction 
of the following clauses (we assume without loss of generality that $l \geq 3$). 
\begin{itemize}
\item[a] For each relation symbol $R \in \tau$, say of arity $p$, 
and each $i<j\leq p$, add the conjunct $\neg R(x_1,\dots,x_p) \vee E(x_i,x_j)$ to $\psi$.
%\item For each relation symbol $R \in \tau$, say of arity $p$, add to $\psi$ the conjunct $$R(x_1,\dots,x_p) \rightarrow \bigwedge_{i<j \leq p} E(x_i,x_j) \; .$$ 
%Note that this clause can be re-written into an equivalent conjunction of connected clauses. 
\item[b] Add the conjunct $\neg E(x_1,x_2) \vee \neg E(x_2,x_3) \vee E(x_1,x_3)$ 
to $\psi$.
\item[c] Add the conjunct $\neg E(x_1,x_2) \vee E(x_2,x_1)$ to $\psi$.
\item[d] For each clause %$\phi_1 \vee \dots \vee \phi_k$ 
$\phi'$ of $\phi$ with variables $y_1,\dots,y_q \subseteq \{x_1,\dots,x_l\}$, add to $\psi$ the conjunct $$\phi' \vee \bigvee_{i<j \leq q} \neg E(y_i,y_j)\; .$$
\end{itemize}
Clearly, $\Psi$ is monotone if $\Phi$ is monotone. 
We claim that the connected monotone SNP sentence $\Psi$ is equivalent to $\Phi$. 
Suppose first that $\bA$ is a finite structure that satisfies $\Phi$.
Then there is a $\sigma$-expansion $\bA'$ of $\bA$ that satisfies $\forall \bar x \colon \phi$.
The expansion of $\bA'$ by the relation $E = A^2$ shows that $\bA$ also satisfies $\forall \bar x \colon \psi$.

Now suppose that $\bA$ is a finite structure with domain $A$ that satisfies $\Psi$.
Then there is a $(\sigma \cup \{E\})$-expansion $\bA'$ of $\bA$ that satisfies $\forall \bar x \colon \psi$.
Write $\bA' = \bA'_1 \uplus \dots \uplus \bA_l'$ for connected $\sigma$-structures $\bA'_1, \dots, \bA_l'$.
Note that the clauses of $\psi$ force that the relation $E$ denotes $A_i^2$ in the structure $\bA'_i$, for each $i \leq l$.
Let $\bA_i$ be the $\sigma$-reduct of $\bA'_i$. Then $\bA_i$ satisfies $\forall \bar x \colon \phi$, because if there was a clause
$\phi'$ from $\phi$ violated in $\bA_i$ then the corresponding clause in $\psi$ would be violated in $\bA_i'$.
Hence, $\bA_i \models \Phi$ for every $i \leq l$, and since $\Phi$ is closed under disjoint unions,
we also have that $\bA \models \Phi$. 
\qed

\section{From CSPs to ASNP Sentences}
\label{sect:csp-ASNP}
Let $\bC$ be a \red{first-order} reduct of a binary finitely bounded homogeneous structure $\bB$. 
In this section we construct an ASNP sentence
such that $\Csp(\bC) = \llbracket \Phi \rrbracket$,
thus proving Theorem~\ref{thm:find-asnp}. 

Let $\sigma$ be the signature of
$\bB$ and $\tau$ the signature of $\bC$. 
We may assume without loss of generality
that $\bB$ contains a binary relation $E$ that denotes the equality relation; it is easy to
see that an expansion by the equality relation
preserves finite boundedness. 
Consider the structure $\bB^*$ with the domain
$B \times {\mathbb N}$ where 
$$R^{\bB^*} := \{((b_1,n_1),\dots,(b_k,n_k)) \mid 
n_1,\dots,n_k \in {\mathbb N}, (b_1,\dots,b_k) \in R^{\bB} \}.$$
To show that $\bB^*$ is homogeneous, 
let $h$ be an isomorphism between finite 
substructures of $\bB^*$. Let $T \subseteq B$ be the set of all first entries of elements of the first structure. Define $g \colon T \to B$ by 
picking for $b \in T$ 
an element of the form $(b,n) \in S$
and defining by $g(b) := h(b,n)_1$. This is well-defined: if $h$ is defined on $(b,n_1)$ and on $(b,n_2)$, then $((b,n_1),(b,n_2)) \in E^{\bB^*}$, and hence $h(b,n_1)_1 = h(b,n_2)_1$. The same consideration for $h^{-1}$ shows that $g$ is a bijection, and in fact an isomorphism between
finite substructures of $\bB$. 
By the homogeneity of $\bB$ there exists 
an extension $g^* \in \Aut(\bB)$ of $g$.
For each $b \in B$ pick a permutation 
$f_b$ of ${\mathbb N}$ that extends the bijection
given by $n \mapsto h(b,n)_2$. 
Then the map $h^* \colon B^* \to B^*$
given by $h(b,n) := (g^*(b),f_b(n))$ is an automorphism of $\bB^*$ that extends $h$. 
Since $\bB$ is finitely bounded, there exists
a universal $\sigma$-formula $\phi$ such that
$\Age(\bB) = \llbracket \phi \rrbracket$. 
Note that $\phi$ might contain the equality symbol (which we do not allow in SNP sentences).
Let $\phi^*$ be the formula obtained from $\phi$ by 
\begin{itemize}
\item replacing each occurrence of the equality symbol by the symbol $E \in \sigma$; 
\item joining conjuncts that imply that $E$ denotes an equivalence relation; 
\item joining for every $R \in \sigma$ of arity $n$ the conjunct 
$$\forall x_1,\dots,x_n,y_1,\dots,y_n \big (R(x_1,\dots,x_n) \vee \neg R(y_1,\dots,y_n) 
\vee \bigvee_{i \leq n} \neg E(x_i,y_i) \big)$$
(implementing indiscernibility of identicals for 
the relation $E$). 
\end{itemize}
We claim that $\Age(\bB^*) = \llbracket \phi^* \rrbracket$. To see this, let $\bA^*$ be a finite $\sigma$-structure. 
If $\bA^*$ satisfies $\phi^*$, then every induced
substructure $\bA$ of $\bA^*$ with the property
that $(x,y) \in E^{\bA}$ implies that at most one of $x$ and $y$ is an element of $A$, 
satisfies $\phi$, and hence is a substructure
of $\bB$. This in turn means that $\bA^*$ is in $\Age(\bB^*)$. The implications in this statement can be reversed which shows the claim. 

\medskip 
Let $\phi'$ be the formula obtained from $\phi^*$ 
as follows. \red{For each $S \in \tau$ let
$\chi_S$ be the first-order definition of 
$S^{\bC}$ in $\bB$; since $\bB$ is homogeneous
we may assume that $\chi_S$ is quantifier-free~\cite{HodgesLong}.
Furthermore, we may assume that $\chi_S$ is given in conjunctive normal form. Let
$k$ be the arity of $S$. 
We then add for each conjunct $\chi_{S}'$
of $\chi_S$ the conjunct 
$$\forall x_1,\dots,x_k \big (S(x_1,\dots,x_k) \Rightarrow \chi'_S(x_1,\dots,x_k) \big)$$
%Let $\bB'$ be the structure obtained
%from $\bB^*$ by renaming each relation $R \in \tau$ to $R'$; let $\sigma'$ be the signature of $\bB'$. 
By construction, the sentence $\Phi$ obtained from $\phi'$ 
by quantifying all relation symbols of $\sigma$ 
is an ASNP $\tau$-sentence. }

 \red{We claim that a finite $\tau$-structure $\bA$ 
 satisfies $\Phi$ if and only if $\bA \rightarrow \bC$.
 If $\bA \models \Phi$ then $\bA$ has a
 $\tau \cup \sigma$-expansion $\bA'$ 
 that satisfies $\phi'$. Let $\bA^*$ be the $\sigma$-reduct of $\bA'$. By the construction of $\Phi$ this means that $\bA^*$ is in $\Age(\bB^*)$. 
Since $\bA'$ satisfies the additional conjuncts 
of $\phi'$ we have 
 $\bA \rightarrow \bC$.
Conversely, suppose that $\bA \rightarrow \bC$. Then $\bA \rightarrow \bB^*$. Let $\bA'$ be the 
$\tau \cup \sigma$-structure defined as in Claim 1 of the previous theorem. Then $\bA' \models \phi'$ and therefore witnesses $\bA \models \Phi$.}
%
%It follows from the claim above that $\Phi$ is even an ASNP $\tau$-sentence.
%%% and that has the first-order part $\phi' \wedge \bigwedge_{R \in \tau} \forall \bar x (R(\bar x) \Leftrightarrow \psi_R'(\bar x))$ 
%Moreover, if $\bA$ is a finite $\tau$-structure then $\bA \models \Phi$ if and only if
%$\bA$ has a $\tau \cup \sigma'$-expansion $\bA'$ that satisfies $\phi'$ if and only if $\bA' \in \Age(\bB')$. If $\bA \models \Phi$ then $\bA \to \bC$. Conversely, if $\bA \to \bC$, 
%let $\bA'$ be the $\sigma$-expansion defined as in Claim 1 of the proof of the previous theorem. Then  $\bA'$ even has an embedding into $\bB'$,
%and hence satisfies $\phi'$. 
%It follows that $\bA \models \Phi$. 
\qed 

\section{Proofs Concerning Guarded SNP} 
\label{sect:a4}
We prove Proposition~\ref{prop:disconnected}. 
Let $\Phi$ be a guarded SNP sentence. 
Suppose that 
the quantifier-free part of $\Phi$
has a disconnected clause $\psi$ (Definition~\ref{def:mmc}).
By definition the variable set can be partitioned
into non-empty variable sets $X_1$ and $X_2$
such that for every negative literal
$\neg R(x_1,\dots,x_r)$ of the clause 
either $\{x_1,\dots,x_r\} \subseteq X_1$ or $\{x_1,\dots,x_r\} \subseteq X_2$. 
The same is true for every positive literal,
since otherwise the definition of guarded clauses would imply a negative literal on a set that
contains $\{x_1,\dots,x_r\}$,
contradicting the property above. 
%Recall from the definition of connectedness
%that positive literals are ignored.
%However, recall from the definition of guarded clauses that such literals must appear 
%as a head $\beta$ in the clause, and thus
%there must also be a negative literal
%in the clause containing all the variables
%of $\beta$. 
Hence,  
 $\psi$ can 
be written as $\psi_1(\bar x) \vee \psi_2(\bar y)$ 
for non-empty disjoint tuples of variables $\bar x$ and $\bar y$. 
Let $\phi_1$ be the formula obtained from 
$\phi$ by replacing $\psi$ by $\psi_1$,
and let $\phi_2$ be the formula obtained from 
$\phi$ by replacing $\psi$ by $\psi_2$.

Let $P_1,\dots,P_k$ be the existential predicates in $\Phi$, and let $\tau$ be the input signature of $\Phi$. 
It suffices to show that for every $(\tau \cup \{P_1,\dots,P_k\})$-expansion 
$\bA'$ of $\bA$ we have that $\bA'$ satisfies $\phi$ if and only if $\bA'$ satisfies $\phi_1$ or $\phi_2$. 
If $\bA'$ falsifies a clause of $\phi$,
there is nothing to show since then
$\bA'$ satisfies neither $\phi_1$ nor $\phi_2$.
If $\bA'$ satisfies all clauses of $\phi$,
it in particular satisfies a literal from $\psi$;
depending on whether this literal lies in $\psi_1$ or in $\psi_2$, we obtain that $\bA'$ satisfies
$\psi_1$ or $\psi_2$, and hence
$\phi_1$ or $\phi_2$.
Iterating this process for each 
disconnected clause of $\phi$, 
we eventually arrive at a finite 
disjunction of connected
guarded SNP sentences. 
\qed

 \end{document}